\newtheorem{theorem}{Theorem}[section]
\newtheorem{lemma}[theorem]{Lemma}
\newtheorem{corollary}[theorem]{Corollary}
\newtheorem{proposition}[theorem]{Proposition}
\newtheorem{definition}[theorem]{Definition}
\newtheorem{remark}[theorem]{Remark}
\newtheorem{example}[theorem]{Example}
\newcommand\cN{\mathcal{N}}
\newcommand\cM{\mathcal{M}}
\newcommand\PN{\mathcal{P}(\cN)}
\newcommand\PM{\mathcal{P}(\cM)}
\newcommand\hA{{\hat A}}
\newcommand\hB{\hat B}
\newcommand\hE{\hat E}
\newcommand\EA{E^{\hA}}
\newcommand\hEA{\hE^{\hA}}
\newcommand\bbR{\mathbb{R}}
\newcommand\bbN{\mathbb{N}}
\newcommand\ra{\rightarrow}
\newcommand\mt{\mapsto}
\newcommand\lra{\longrightarrow}
\newcommand\lmt{\longmapsto}
\newcommand\hra{\hookrightarrow}
\newcommand\bmeet{\bigwedge}
\newcommand\meet{\wedge}
\newcommand\bjoin{\bigvee}
\newcommand\join{\vee}
\newcommand\oA{o^{\hA}}
\newcommand\eR{\overline{\bbR}}
\newcommand\hP{{\hat P}}
\newcommand\hQ{\hat Q}
\newcommand\cH{\mathcal{H}}
\newcommand\Eo{E^o}
\newcommand\oE{o^E}
\renewcommand\sp{\operatorname{sp}}
\newcommand\sa{\operatorname{sa}}
\newcommand\mc[1]{\mathcal{#1}}
\newcommand\dito[2]{\delta^i({#1})_{#2}}
\newcommand\doto[2]{\delta^o({#1})_{#2}}
\newcommand\cA{\mathcal{A}}
\newcommand\eq[1]{(\ref{#1})}
\newcommand\PzN{\mathcal{P}_0(\cN)}
\newcommand\eA{e^{\hat A}}
\newcommand\ps[1]{\underline{#1}}
\newcommand\Sig{\ps{\Sigma}}
\newcommand\on[1]{\operatorname{#1}}
\newcommand\deo{\delta^o}
\newcommand\dei{\delta^i}
\newcommand\im{\on{im}}
\newcommand\cB{\mc B}
\newcommand\id{\on{id}}
\newcommand\aA{a^{\hA}}
\newcommand\FA{F^{\hA}}
\newcommand\hFA{\hF^{\hA}}
\newcommand\BH{\mc B(\cH)}
\newcommand\PH{\mc P(\cH)}
\newcommand\zA{z^{\hA}}
\newcommand\hF{\hat F}
\definecolor{darkgreen}{rgb}{0,.66,0}
\begin{document}
\title[Self-adjoint Operators as Functions I]{Self-adjoint Operators as Functions I:\\Lattices, Galois Connections, and the Spectral Order}

\author{Andreas D\"oring}
\address{Andreas D\"oring\newline
\indent Clarendon Laboratory\newline
\indent Department of Physics\newline%
\indent University of Oxford\newline
\indent Parks Road\newline
\indent Oxford OX1 3PU, UK}
\email{doering@atm.ox.ac.uk}
\author{Barry Dewitt}
\address{Barry Dewitt\newline
\indent Department of Engineering and Public Policy\newline
\indent Carnegie Mellon University\newline
\indent 5000 Forbes Avenue\newline
\indent Pittsburgh, PA, 15213\newline
\indent USA}
\email{barrydewitt@cmu.edu}
\date{December 5, 2013}

\begin{abstract}
Observables of a quantum system, described by self-adjoint operators in a von Neumann algebra or affiliated with it in the unbounded case, form a conditionally complete lattice when equipped with the spectral order. Using this order-theoretic structure, we develop a new perspective on quantum observables.

In this first paper (of two), we show that self-adjoint operators affiliated with a von Neumann algebra $\cN$ can equivalently be described as certain real-valued functions on the projection lattice $\PN$ of the algebra, which we call $q$-observable functions. Bounded self-adjoint operators correspond to $q$-observable functions with compact image on non-zero projections. These functions, originally defined in a similar form by de Groote in \cite{deG05}, are most naturally seen as adjoints (in the categorical sense) of spectral families. We show how they relate to the daseinisation mapping from the topos approach to quantum theory \cite{DI11}. Moreover, the $q$-observable functions form a conditionally complete lattice which is shown to be order-isomorphic to the lattice of self-adjoint operators with respect to the spectral order.

In a subsequent paper \cite{DoeDew12b}, we will give an interpretation of $q$-observable functions in terms of quantum probability theory, and using results from the topos approach to quantum theory, we will provide a joint sample space for all quantum observables.
\end{abstract}

\maketitle

\vspace{0.7cm}

\textbf{Keywords:} Self-adjoint operator, observable, von Neumann algebra, spectral order, lattice, Galois connection, adjunction

\section{Introduction}			\label{Sec_Introd}
It is well-known that the self-adjoint operators representing observables of a quantum system form a real vector space. Yet, while adding self-adjoint operators and multiplying them by real numbers are mathematically natural operations, it is much less clear what these operations mean physically. For example, what physical interpretation is attached to the sum of position and momentum?

The real vector space of bounded self-adjoint operators describing the physical quantities of a quantum system can be regarded as the self-adjoint part of a complex operator algebra; in particular, we will focus on von Neumann algebras here. In order to include unbounded self-adjoint operators, we will consider operators affiliated with a given von Neumann algebra.

Crucially, we emphasise the order structure on the set of self-adjoint operators in (or affiliated with) a von Neumann algebra over the linear structure. Thus, we consider the partial order on projections, and more importantly, the spectral order \cite{Ols71,deG04} on self-adjoint operators. The latter, which may be less well-known, generalises the partial order on projections and differs from the usual linear order on self-adjoint operators if the algebra is nonabelian. We provide some background on order theory and von Neumann algebras in section \ref{Sec_Prelims}.

Let $\hA$ be a self-adjoint operator, contained in a von Neumann algebra $\cN$, or affiliated with it if $\hA$ is unbounded. Let $\PN$ denote the lattice of projections in $\cN$. The key observation is that the spectral family
\begin{equation}
			\EA:\bbR\ra\PN
\end{equation}
of $\hA$ is a meet-preserving map between meet-semilattices (technically, we will use the extended reals $\eR$ instead of $\bbR$ to obtain preservation of all meets). This implies that $\EA$ has a left adjoint $\oA$ in the sense of category theory such that $\EA$ and $\oA$ form a Galois connection. $\oA$ is a real-valued function on the projections in the von Neumann algebra $\cN$.

We call $\oA$ the \emph{$q$-observable function} of $\hA$. This function is determined uniquely by $\EA$, and hence by $\hA$, and conversely determines them uniquely. These functions were first considered by de Groote \cite{deG01}, and independently (as far we are aware) by Comman \cite{Com06}. Neither of these authors used the definition via Galois connections. In section \ref{Sec_DefAndBasicProperties}, we will show that the image of $\oA$ on non-zero projections equals the spectrum of $\hA$. Moreover, we will characterise $q$-observable functions abstractly without any reference to operators and show that they form a conditionally complete lattice isomorphic to the lattice of self-adjoint operators affiliated with $\cN$ with respect to the spectral order.

In section \ref{Sec_FurtherProperties}, we consider how $q$-observable functions behave under extension and restriction of their domain respectively codomain. This leads to a characterisation of the maps called \emph{outer} and \emph{inner daseinisation of self-adjoint operators}, which are approximations with respect to the spectral order and which play a key role in the topos approach to quantum theory \cite{DI08a,DI08b,DI08c,DI08d,DI11,Doe11}; see also \cite{HLS09,Wol10}. Moreover, we show that there exists a limited form of `functional calculus': for suitable monotone functions $f:\eR\ra\eR$, it holds that $o^{f(\hA)}=f(\oA)$.

In section \ref{Sec_qAntonymousFunctions}, it is shown that in addition to $q$-observable functions there exists a second sort of functions associated with self-adjoint operators affiliated with a von Neumann algebra $\cN$, called \emph{$q$-antonymous functions}. Multiplying the $q$-observable function $\oA$ of a self-adjoint operator $\hA$ by a negative real number gives the $q$-antonymous function of $-\hA$, that is, $a^{-\hA}=-\oA$.

Section \ref{Sec_PhysInterpret} provides physical interpretations of the mathematical results from previous sections and gives a short outlook on the second paper, ``Self-adjoint Operators as Functions II: Quantum Probability'' \cite{DoeDew12b}.

\section{Some mathematical preliminaries}			\label{Sec_Prelims}
In this section, we briefly present some basic definitions and results from order and lattice theory (subsection \ref{Subsec_OrderThBasics}) and from the theory of von Neumann algebras (subsection \ref{Subsec_VNABasics}) that will be used in the rest of the paper.


\subsection{Some order theory basics}			\label{Subsec_OrderThBasics}
Standard references on order theory are e.g. \cite{DavPri02,Bir67}. A \emph{partially ordered set (poset) $(P,\leq)$} is a set $P$ with a binary relation $\leq$, the \emph{order}, that is reflexive, antisymmetric and transitive. An element $\bot_P\in P$ such that $\bot_P\leq a$ for all elements $a\in P$ is called a \emph{bottom element}. If $P$ has a bottom element, it is necessarily unique. An element $\top_P$ such that $a\leq\top_P$ for all $a\in P$ is called a \emph{top element}. If $P$ has a top element, it is unique.

A \emph{meet-semilattice} is a poset $P$ such that any two elements $a,b$ have a \emph{meet (greatest lower bound)} $a\meet b$ in $P$, that is,
\begin{equation}
			\forall c\in P : c\leq a,b \quad\Longleftrightarrow\quad c\leq a\meet b.
\end{equation}
A meet-semilattice is called \emph{complete} if every family $(a_i)_{i\in I}$ of elements in $P$ has a greatest lower bound in $P$, denoted by $\bmeet_{i\in I} a_i$. In a complete meet-semilattice, the empty family has a meet $\bmeet\emptyset$, which is the top element of $P$. Also, the family containing all elements of $P$ has a meet $\bmeet_{a\in P} a$, which is the bottom element of $P$.

A \emph{join-semilattice} is a poset $P$ such that any two elements $a,b$ have a \emph{join (least upper bound)} $a\join b$ in $P$, that is,
\begin{equation}
			\forall c\in P : a,b\leq c \quad\Longleftrightarrow\quad a\join b\leq c.
\end{equation}
A join-semilattice is called \emph{complete} if every family $(a_i)_{i\in I}$ of elements in $P$ has a least upper bound in $P$, denoted by $\bjoin_{i\in I} a_i$. In a complete join-semilattice, the empty family has a join $\bjoin\emptyset$, which is the bottom element of $P$. Also, the family containing all elements of $P$ has a join $\bjoin_{a\in P} a$, which is the top element of $P$.

A poset $P$ that is both a meet-semilattice and a join-semilattice is called a \emph{lattice}. A lattice is \emph{complete} if it is complete as a meet- and a join-semilattice. If $P$ is a complete meet-semilattice, joins can be defined in $P$ by
\begin{equation}
			\forall (a_i)_{i\in I}\subseteq P: \bjoin_{i\in I} a_i := \bmeet\{b\in P \mid \forall i\in I: a_i\leq b\},
\end{equation}
that is, the least upper bound of the family $(a_i)_{i\in I}$ is the greatest lower bound of all $b\in P$ that are greater than all the $a_i$. Conversely, in a complete join-semilattice, meets can be defined in terms of joins.

A lattice $P$ is \emph{distributive} if, for all $a,b,c\in P$,
\begin{align}
			a\meet (b\join c) &= (a\meet b)\join(a\meet c),\\
			a\join (b\meet c) &= (a\join b)\meet(a\join c).
\end{align}
In fact, either of these conditions implies the other.

In categorical terms,\footnote{We will make clear how the order-theoretic constructions we use can be phrased naturally in the language of category theory, but the article is entirely understandable without knowledge of category theory.} a poset is a category $P$ with at most one arrow between any two objects $a,b$ (expressing the fact that $a\leq b$). Such a category $P$ is a meet-semilattice if binary products exist, and a join-semilattice if binary coproducts exist. Completeness corresponds to existence of all products resp. coproducts. The bottom element is a (necessarily unique) initial object, given by the empty join and the meet over all elements of $P$. The top element is a (necessarily unique) terminal object, given by the empty meet and the join over all elements of $P$.

\begin{example}			\label{Ex_Posets}
The natural numbers $\bbN$ with the usual order are a poset. There is a bottom element $\bot_{\bbN}=0$, but no top element. $(\bbN,\leq)$ is both a meet-semilattice, with meets given by minima, and a join-semilattice, with joins given by maxima, but $(\bbN,\leq)$ is not complete: neither the empty meet nor the join over all of $\bbN$ exists.

The real numbers $\bbR$ with their usual order are a poset with no bottom and no top element. $(\bbR,\leq)$ is a meet-semilattice, with minima as meets, and a join-semilattice, with maxima as joins, but it is not complete. 

But $\bbR$ is `almost' a complete meet-semilattice: any family $(r_i)_{i\in I}\subset\bbR$ that has a lower bound has a greatest lower bound, given by the \emph{infimum $\inf_{i\in I} r_i$}. This means that $\bbR$ is a \emph{conditionally} (or \emph{boundedly}) \emph{complete} meet-semilattice. It can be made into a complete meet-semilattice by adding a bottom element $-\infty$ (which is the meet over all elements of $\bbR$) and a top element $\infty$ (which is the empty meet). The \emph{extended reals}
\begin{equation}
			\eR=\{-\infty\}\cup\bbR\cup\{\infty\}
\end{equation}
form a complete meet-semilattice (with infima as meets) that will play an important role later on. Similarly, if a family $(r_i)_{i\in I}$ has an upper bound, it has a least upper bound, given by the \emph{supremum $\sup_{i\in I} r_i$}. This means that $\bbR$ is a conditionally (or boundedly) complete join-semilattice. The extended reals $\eR$ form a complete join-semilattice (with suprema as joins) and hence a complete lattice.

Both $(\bbN,\leq)$ and $(\bbR,\leq)$ are \emph{totally ordered}: for any pair $(a,b)$ of elements, either $a\leq b$ or $b\leq a$ (or both if $a=b$).
\end{example}

\begin{remark}
Of course, $\eR$ is a compactification of $\bbR$. Our emphasis in this article is on order-theoretic aspects, so we will not use topological arguments and structure explicitly (though many arguments could be phrased in topological terms).
\end{remark}

\begin{example}
Let $S$ be a set. The power set $P(S)$, consisting of all subsets of $S$ with inclusion as partial order, is a complete lattice. Meets are given by intersections, joins by unions. The bottom element in $(P(S),\subseteq)$ is the empty subset, the top element is $S$. Moreover, $(P(S),\subseteq)$ is a distributive lattice.

Let $\cN$ be a von Neumann algebra. The projections in $\cN$ form a complete lattice $\PN$ with respect to the order
\begin{equation}
			\forall \hP,\hQ\in\PN: \hP\leq\hQ :\quad\Longleftrightarrow\quad \hP\hQ=\hQ\hP=\hP.
\end{equation}
The lattice $(\PN,\leq)$ is distributive if and only if the von Neumann algebra $\cN$ is abelian.

The lattices $(P(S),\subseteq)$ and $(\PN,\leq)$ have additional structure: they both have \emph{complements}. In $(P(S),\subseteq)$, the complement of a subset $T\subseteq S$ is the set-theoretic complement $S\backslash T$. Together with unions and intersections, this makes $(P(S),\subseteq)$ into a complete Boolean algebra. In the projection lattice $(\PN,\leq)$, the (ortho)complement $\hP$ of a projection is $\hat 1-\hP$. Together with meets and joins, this makes $(\PN,\leq)$ into a complete orthomodular lattice, which is a complete Boolean lattice if and only if $\cN$ is abelian; see e.g. Prop. 2 in \cite{Red09}.
\end{example}

Let $(P,\leq)$ and $(Q,\sqsubseteq)$ be two posets. A map $f:P\ra Q$ is called \emph{monotone (order-preserving)} if
\begin{equation}
			\forall a,b\in P: a\leq b \quad\Longrightarrow\quad f(a)\sqsubseteq f(b).
\end{equation}
A pair of monotone maps $(f:P\ra Q,\;g:Q\ra P)$ is called a \emph{Galois connection} if
\begin{equation}			\label{Def_GaloisConnection}
			\forall a\in P\;\forall x\in Q: f(a)\sqsubseteq x \quad \Longleftrightarrow \quad a \leq g(x).
\end{equation}
Categorically, a monotone map is a functor from $P$ to $Q$, and a Galois connection is a pair of adjoint functors \cite{McL98}, where $f$ is the \emph{left adjoint} and $g$ the \emph{right adjoint}. Left adjoints preserve joins (least upper bounds), and right adjoints preserve meets (greatest lower bounds). If a functor $f:P\ra Q$ has a right adjoint $g:Q\ra P$, then $g$ is unique; likewise, if a functor $g:Q\ra P$ has a left adjoint $f:P\ra Q$, then $f$ is unique.

\begin{remark}			\label{Rem_UnitCounitComposite}
If $f:P\ra Q$ is a monotone map that has an adjoint $g:Q\ra P$, eq. \eq{Def_GaloisConnection} shows that $g$ is also monotone. If $(f:P\ra Q,\;g:Q\ra P)$ is a Galois connection, then it is easy to show that the composite map $g\circ f:P\ra P$, the \emph{unit of the adjunction}, is larger than $\id_P$, the identity map on $P$ (i.e., $g(f(a))\geq a$ for all $a\in P$). Similarly, $f\circ g:Q\ra Q$, the \emph{counit of the adjunction}, is smaller than $\id_Q$.

If we have two Galois connections $(f_1:P\ra Q,\;g_1:Q\ra P)$ and $(f_2:Q\ra R,\;g_2:R\ra Q)$, then the composite $(f_2\circ f_1:P\ra R,\;g_1\circ g_2:R\ra P)$ is a Galois connection, too, where $f_2\circ f_1$ is left adjoint to $g_1\circ g_2$.
\end{remark}


As we saw, complete meet-semilattices are also complete join-semilattices (and vice versa), and so are complete lattices. The following theorem is the \emph{adjoint functor theorem for posets} (see also Example 9.33 in \cite{Awo10}):
\begin{theorem}			\label{Thm_AdjFunctorThmForPosets}
Let $(P,\leq),(Q,\sqsubseteq)$ be complete join-semilattices. If $f:P\ra Q$ is a monotone map, then $f$ has a right adjoint $g:Q\ra P$ if and only if $f$ preserves all joins. The right adjoint $g$ is given by
\begin{align}
			g: Q &\lra P\\			\nonumber
			x &\lmt \bjoin\{a\in P \mid f(a)\sqsubseteq x\}.
\end{align}
The right adjoint $g$ is monotone and preserves all meets. 

Conversely, if $g:Q\ra P$ is a monotone map between complete meet-semilattices, then $g$ has a left adjoint $f:P\ra Q$ if and only if $g$ preserves all meets. The left adjoint $f$ preserves all joins and is given by
\begin{align}
			f: P &\lra Q\\			\nonumber
			a &\lmt \bmeet\{x\in Q \mid a\leq g(x)\}.
\end{align}
\end{theorem}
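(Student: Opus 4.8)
The plan is to prove the first assertion directly and then obtain the converse by a duality argument, exploiting the fact recorded just before the theorem that a complete join-semilattice is the same thing as a complete meet-semilattice. For the necessity half of the first claim I would assume that $(f,g)$ is a Galois connection and verify that $f$ preserves an arbitrary join $\bjoin_{i\in I} a_i$ by a two-sided comparison. Monotonicity of $f$ gives $\bjoin_{i\in I} f(a_i)\sqsubseteq f(\bjoin_{i\in I} a_i)$ at once. For the reverse inequality I would put $x=\bjoin_{i\in I} f(a_i)$, transport each relation $f(a_i)\sqsubseteq x$ across the adjunction \eq{Def_GaloisConnection} to obtain $a_i\leq g(x)$, conclude $\bjoin_{i\in I} a_i\leq g(x)$ since $g(x)$ is then an upper bound, and transport back to get $f(\bjoin_{i\in I} a_i)\sqsubseteq x$.

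For sufficiency I would assume that $f$ preserves all joins and \emph{define} $g$ by the stated formula $g(x)=\bjoin\{a\in P \mid f(a)\sqsubseteq x\}$, which is well-defined precisely because $P$ is a complete join-semilattice. It then remains to check the adjunction equivalence $f(a)\sqsubseteq x\Leftrightarrow a\leq g(x)$. The implication $(\Rightarrow)$ is immediate: if $f(a)\sqsubseteq x$ then $a$ belongs to the indexing set and so is dominated by its join $g(x)$. The implication $(\Leftarrow)$ is the heart of the matter and is the one place where join-preservation is used essentially: applying $f$ to the defining join and pulling the map inside yields $f(g(x))=\bjoin\{f(a) \mid f(a)\sqsubseteq x\}\sqsubseteq x$, the counit inequality; then monotonicity of $f$ together with the hypothesis $a\leq g(x)$ gives $f(a)\sqsubseteq f(g(x))\sqsubseteq x$, as required.

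The one subtlety to watch is the empty-join case: "$f$ preserves all joins" must be read as including the empty join, so that $f(\bot_P)=\bot_Q$ and the indexing set $\{a \mid f(a)\sqsubseteq x\}$ causes no trouble even when it is empty (its join is still sent correctly by $f$). Once the equivalence is in hand, the remaining assertions are cheap: $g$ is monotone and preserves all meets by the general properties of right adjoints already recorded before the theorem, and its uniqueness is the uniqueness of adjoints quoted there as well.

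Finally, rather than repeating the argument for the converse, I would invoke order duality. Passing to the opposite posets $P^{\op}$ and $Q^{\op}$ interchanges joins with meets, turns meet-preservation into join-preservation, and swaps left and right adjoints, so the second assertion is exactly the first applied to $g\colon Q^{\op}\ra P^{\op}$, and the displayed formula $f(a)=\bmeet\{x\in Q \mid a\leq g(x)\}$ is the literal dual of the formula for $g$. The main — indeed the only genuine — obstacle is the $(\Leftarrow)$ direction of sufficiency, namely deriving the counit inequality $f(g(x))\sqsubseteq x$ from join-preservation; everything else is bookkeeping with the order axioms and the already-cited facts about adjoints.
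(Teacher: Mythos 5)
Your proof is correct and follows essentially the same route as the paper's: the same two-sided comparison (transporting $f(a_i)\sqsubseteq\bjoin_i f(a_i)$ across the adjunction) for necessity, and the same counit computation $f(g(x))=\bjoin\{f(a)\mid f(a)\sqsubseteq x\}\sqsubseteq x$ for sufficiency with $g$ defined by the displayed formula. Your explicit handling of the empty join and the order-duality phrasing of the converse are slightly more careful than the paper's ``completely analogous,'' but the substance is identical.
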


\begin{proof}
For the sake of completeness, we give a proof of this standard result. First, suppose that $f:P\ra Q$ has a right adjoint $g: Q \ra P$.  Let $(a_i)_{i\in I} \subseteq P$ be an arbitrary family of elements in $P$. Then, $a_i \leq \bigvee_{i\in I} a_i$ for all $i \in I$, so $f(a_i) \sqsubseteq f\left(\bigvee_{a_i\in I} a_i\right)$ for all $i \in I$ since $f$ is monotone, and therefore $\bigvee_{i\in I} f(a_i) \sqsubseteq f\left(\bigvee_{i\in I} a_i\right)$. Now, for all $i \in I$, we have $f(a_i) \sqsubseteq \bigvee_{i\in I} f(a_i)$. Since $g$ is right-adjoint to $f$, we have, for each $i \in I$, $a_i \leq g\left(\bigvee_{i\in I} f(a_i)\right)$ and so $\bigvee_{i\in I} a_i \leq g\left(\bigvee_{i\in I} f(a_i)\right)$. Therefore, since $f$ is left-adjoint to $g$, $f\left(\bigvee_{i\in I} a_i\right) \sqsubseteq \bigvee_{i\in I} f(a_i)$. Thus, we have $f\left(\bigvee_{i\in I} a_i\right) = \bigvee_{i\in I} f(a_i)$ for any family $(a_i)_{i\in I} \subseteq P$.

Now suppose that $f: P \ra Q$ preserves all joins.  Let $g: Q \ra P$ be defined by $g(x) = \bigvee\{a \in P \mid f(a)\sqsubseteq x\}$ for all $x \in Q$. Let $a_1 \in P$ such that $f(a_1) \sqsubseteq x$ for a given $x \in Q$. Then $a_1 \in \{a \in P \mid f(a) \sqsubseteq x\}$, so $g(x) = \bigvee\{a \in P \mid f(a)\sqsubseteq x\} \geq a_1$. Conversely, suppose that $a_1 \leq g(x)$.  Then, since $f$ is a monotone map and preserves joins,
\begin{align}
			f(a_1) \sqsubseteq f(g(x)) &= f\left(\bigvee\{a \in P \mid f(a) \sqsubseteq x\}\right)\\
			&= \bigvee \{f(a) \in Q \mid f(a) \sqsubseteq x\} \sqsubseteq x,
\end{align}
so $f(a_1) \sqsubseteq x$.  Therefore, $a \leq g(x) \Leftrightarrow f(a) \sqsubseteq x$ for all $a \in P$ and $x \in Q$, so by the definition of a Galois connection \eq{Def_GaloisConnection}, $g$ is right-adjoint to $f$.

The proof that a map $g:Q\ra P$ has a left adjoint $f:P\ra Q$ if and only if $g$ preserves all meets is completely analogous.
\end{proof}

The adjoint functor theorem for posets is a key result that will be used throughout.

\subsection{Von Neumann algebras, affiliated self-adjoint operators, and spectral families}			\label{Subsec_VNABasics}
Let $\cN$ be a von Neumann algebra, i.e., a subalgebra $\cN\subseteq\BH$ of the algebra of bounded operators on a complex Hilbert space $\cH$ that is closed in the weak (and strong) operator topology. $\BH$ itself is a von Neumann algebra. Standard references on von Neumann algebras and the various topologies on $\BH$ are \cite{KR83+86,Tak79+02}.

If $\cH$ is a complex Hilbert space, then $\PH$ denotes the lattice of projections onto closed subspaces of $\cH$. If $\cN$ is a von Neumann subalgebra of $\BH$, we will always assume that the unit element in $\cN$ is the identity (projection) $\hat 1$ on $\cH$. As mentioned above, the projections in $\cN$ form a complete orthomodular lattice $\PN$, and $\mc P(\BH)=\PH$. 

\begin{lemma}			\label{Lem_InclusionIsCOMLMorphism}
Let $\cN$ be a von Neumann algebra, and let $\cM\subset\cN$ be a von Neumann subalgebra such that the unit elements in $\cM$ and $\cN$ coincide. The inclusion map $i:\PM\ra\PN$, $\hP\mt\hP$, is a morphism of complete orthomodular lattices and hence has both a left adjoint $\deo_\cM:\PN\ra\PM$ and a right adjoint $\dei_\cM:\PN\ra\PM$.
\end{lemma}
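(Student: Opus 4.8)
The plan is to verify directly that $i$ preserves all the structure of a complete orthomodular lattice --- arbitrary meets, arbitrary joins, the orthocomplement, and the top and bottom elements --- and then to invoke the adjoint functor theorem for posets (Theorem \ref{Thm_AdjFunctorThmForPosets}) to produce the two adjoints. The map $i$ is well defined because a projection in $\cM$ is a self-adjoint idempotent lying in $\cN$, hence a projection in $\cN$; it is manifestly monotone, since the order $\hP\leq\hQ\Leftrightarrow\hP\hQ=\hQ\hP=\hP$ is defined identically in both lattices.

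The heart of the argument is that the lattice operations on projections are computed the same way in every von Neumann algebra containing the given family, because they are intrinsic Hilbert-space constructions. Concretely, for a family $(\hP_j)_{j\in J}$ of projections, the meet $\bmeet_{j\in J}\hP_j$ in $\PH$ is the projection onto $\bigcap_{j\in J}\im\hP_j$, while the join $\bjoin_{j\in J}\hP_j$ is the projection onto $\ol{\sum_{j\in J}\im\hP_j}$. I would show that if all the $\hP_j$ lie in a von Neumann algebra $\cA$, then these projections again lie in $\cA$. The clean way to see this is via the double commutant theorem $\cA=\cA''$: if $T\in\cA'$, then $T$ commutes with each self-adjoint $\hP_j$, so each subspace $\im\hP_j$ is invariant under both $T$ and $T^*$ (the latter because $\cA'$ is again a von Neumann algebra, so $T^*\in\cA'$); hence $\bigcap_j\im\hP_j$ is $T$- and $T^*$-invariant, and the projection onto it commutes with $T$. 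As $T\in\cA'$ was arbitrary, this projection lies in $\cA''=\cA$, and the analogous reasoning applies to the join. Applying this to both $\cM$ and $\cN$ shows that the meet (resp.\ join) of a family drawn from $\PM$ agrees whether computed in $\PM$, in $\PN$, or in $\PH$; in particular $i$ preserves all meets and all joins.

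Preservation of the remaining structure is then immediate, and it is exactly here that the hypothesis on units enters: the orthocomplement of $\hP\in\PM$ is $\hat 1-\hP$, and since the units of $\cM$ and $\cN$ coincide this is also its orthocomplement in $\PN$, so $i(\hat 1-\hP)=\hat 1-i(\hP)$. The bottom element $\hat 0$ is common to both lattices, and the top element $\hat 1$ is preserved by the same coincidence of units. Thus $i$ is a morphism of complete orthomodular lattices.

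Finally, since $\PM$ and $\PN$ are complete lattices --- hence both complete join-semilattices and complete meet-semilattices --- and $i$ preserves all joins, Theorem \ref{Thm_AdjFunctorThmForPosets} yields a right adjoint $\dei_\cM:\PN\ra\PM$; since $i$ also preserves all meets, the same theorem yields a left adjoint $\deo_\cM:\PN\ra\PM$. The only genuinely nontrivial step is the preservation of infinite meets and joins, i.e.\ the double-commutant argument showing that the order-theoretic lattice operations are independent of the ambient von Neumann algebra; once this is established, everything else is routine.
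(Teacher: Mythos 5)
Your proof is correct and follows essentially the same route as the paper's: the lattice operations on projections are intrinsic Hilbert-space constructions (projections onto intersections, respectively closed spans, of ranges), hence agree whether computed in $\PM$, $\PN$ or $\PH$; orthocomplements are preserved because the units coincide; and the two adjoints then follow from the adjoint functor theorem for posets, with the same pairing (join-preservation gives the right adjoint $\dei_\cM$, meet-preservation the left adjoint $\deo_\cM$). The only differences are in detail, not structure: you supply an explicit double-commutant argument to show the meet and join of a family in $\PM$ again lie in $\cM$, where the paper cites the literature for this, and you treat joins directly rather than deriving join-preservation from meet-preservation via de Morgan's law as the paper does.
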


\begin{proof}
The orthocomplement of a projection $\hP\in\PM$ is $\hat 1-\hP$. Since $i(\hat 1-\hP)=\hat 1-\hP=\hat 1-i(\hP)$, the inclusion $i$ preserves orthocomplements. Let $(\hP_i)_{i\in I}$ be a family of projections in $\PM$. Each projection corresponds to a closed subspace $S$ of $\cH$ such that $\hP(S)=S$ and $\hP(S^\perp)=0$ (see e.g. section 2.5 in \cite{KR83+86}). The meet $\bmeet_{i\in I}\hP_i$ is the projection onto the closed subspace given by the intersection of the closed subspaces that the $\hP_i$ project onto. This intersection is independent of whether the family $(\hP_i)_{i\in I}$ is considered to lie in $\PM$ or in $\PN$ (or in $\PH$), so $i$ preserves all meets. Moreover, $\bjoin_{i\in I}\hP_i=\hat 1-\bmeet_{i\in I}(\hat 1-\hP_i)$ by de Morgan's law, so $i$ preserves all joins, too.

$i$ preserves all meets, so by the adjoint functor theorem for posets (Thm. \ref{Thm_AdjFunctorThmForPosets}) it has a left adjoint, given concretely by $\deo_\cM:\PN\ra\PM$, $\hP\mt\deo(\hP)_{\cM}=\bmeet\{\hQ\in\PM \mid \hQ\geq\hP\}$. Since $i$ also preserves all joins, it has a right adjoint, too, given by $\dei_\cM:\PN\ra\PM$, $\hP\ra\dei(\hP)_\cM=\bjoin\{\hQ\in\PM \mid \hQ\leq\hP\}$.
\end{proof}


For the sake of completeness, we include some standard definitions.
\begin{definition}
Let $\cH$ be a complex Hilbert space, and let $\PH$ be its lattice of projections. A \emph{spectral family} is a map $E:\bbR \ra \PH$, $r\mt\hE_r$ such that
\begin{itemize}
	\item [(i)] for all $r,s\in\bbR$, if $r<s$, then $\hE_r\leq\hE_s$,
	\item [(ii)] $\bmeet_{r\in\bbR}\hE_r = \hat 0$ and $\bjoin_{r\in\bbR}\hE_r = \hat 1$.
\end{itemize}
If, moreover, it holds that
\begin{itemize}
	\item [(iii)] for all $r\in\bbR$, $\bmeet_{s>r}\hE_s=\hE_r$,
\end{itemize}
then $E:\bbR\ra\PH$ is called \emph{right-continuous}. Analogously, if for all $r\in\bbR$, $\bjoin_{s<r}\hE_s=\hE_r$, then $E$ is called \emph{left-continuous}. We will denote left-continuous spectral families by the letter $F$ instead of $E$.

Let $E$ be a right-continuous spectral family. If there is some $a\in\bbR$ such that $\hE_r=\hat 0$ for all $r<a$, then $E$ is \emph{bounded from below}. If there is some $b\in\bbR$ such that $\hE_r=\hat 1$ for all $r\geq b$, then $E$ is \emph{bounded from above}. $E$ is \emph{bounded} if it is bounded from below and from above.

Let $\cN$ be a von Neumann algebra on a Hilbert space $\cH$, and let $\PN$ be its projection lattice. If the image of a spectral family $E$ is in $\PN$, then we say that \emph{$E$ is in $\cN$}.
\end{definition}

A spectral family $E:\bbR\ra\PN$ in $\cN$ can be seen as a monotone function
\begin{align}
			E:\bbR &\lra \PN\\			\nonumber
			r &\lmt \hE_r.
\end{align}
We will use the notations $E:\bbR\ra\PN$ and $E=(\hE_r)_{r\in\bbR}$ for a spectral family interchangably. The following is the important spectral theorem, which relates spectral families and self-adjoint operators:
\begin{theorem}
Let $\hA$ be a self-adjoint operator on a Hilbert space $\cH$. There is a unique right-continuous spectral family $E:\bbR\ra\PH$ such that
\begin{align}			\label{Eq_SpecThm}
			\hA=\int_{-\infty}^{\infty} r\;d\hE_r
\end{align}
in the sense of norm-convergence of approximating Riemann sums. Conversely, every right-continuous spectral family $E:\bbR\ra\PH$ determines a unique self-adjoint operator on $\cH$ by equation \eq{Eq_SpecThm}. The right-continous spectral family corresponding to $\hA$ will be denoted $\EA$.
\end{theorem}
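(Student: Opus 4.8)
The plan is to prove both directions of this classical theorem by passing through the projection-valued measure (PVM) attached to $\hA$ and extracting the spectral family as the projections of the half-lines $(-\infty,r]$. I would first treat \emph{bounded} self-adjoint $\hA$, where $\sp(\hA)$ is compact, and then reduce the unbounded case to it via the Cayley transform; the converse direction amounts to running the same construction backwards, integrating a given spectral family against $r$ to recover a self-adjoint operator.

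For bounded $\hA$ I would first build the continuous functional calculus. From the polynomial calculus $p\mapsto p(\hA)$, the $C^*$-identity and the spectral mapping theorem give $\|p(\hA)\|=\sup_{r\in\sp(\hA)}|p(r)|$, so Stone--Weierstrass extends $p\mapsto p(\hA)$ to an isometric $*$-homomorphism $C(\sp(\hA))\ra\BH$, $f\mapsto f(\hA)$. For each $\psi\in\cH$ the functional $f\mapsto\langle\psi,f(\hA)\psi\rangle$ is positive, so the Riesz--Markov theorem yields a finite Borel measure $\mu_\psi$ with $\langle\psi,f(\hA)\psi\rangle=\int f\,d\mu_\psi$; polarisation produces complex measures $\mu_{\phi,\psi}$. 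Setting $\langle\phi,f(\hA)\psi\rangle:=\int f\,d\mu_{\phi,\psi}$ for bounded Borel $f$ extends the calculus to bounded Borel functions, and $P(\Omega):=\chi_\Omega(\hA)$ defines a PVM on $\BR$. The candidate spectral family is $\hE_r:=P\big((-\infty,r]\big)=\chi_{(-\infty,r]}(\hA)$.

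Next I would check axioms (i)--(iii): monotonicity holds because $\chi_{(-\infty,r]}\le\chi_{(-\infty,s]}$ for $r<s$ and the calculus preserves order on real functions; the boundary conditions (ii) follow from $\chi_{(-\infty,r]}(x)\to 0$ as $r\to-\infty$ and $\to 1$ as $r\to+\infty$ for each $x\in\sp(\hA)$, together with strong $\sigma$-additivity of $P$; and right-continuity (iii) follows from $\bigcap_{s>r}(-\infty,s]=(-\infty,r]$. To recover $\hA=\int_{-\infty}^{\infty}r\,d\hE_r$, note that for a tagged partition $r_0<\dots<r_n$ of an interval containing $\sp(\hA)$ with tags $\xi_k\in[r_{k-1},r_k]$, the Riemann sum $\sum_k\xi_k(\hE_{r_k}-\hE_{r_{k-1}})$ equals $s(\hA)$ for the step function $s=\sum_k\xi_k\chi_{(r_{k-1},r_k]}$; isometry of the calculus then gives $\big\|\hA-\sum_k\xi_k(\hE_{r_k}-\hE_{r_{k-1}})\big\|=\|\id-s\|_{\infty,\sp(\hA)}\le\max_k(r_k-r_{k-1})$, so the sums converge to $\hA$ in norm as the mesh tends to $0$.

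For uniqueness, if two right-continuous spectral families give the same $\hA$ then integrating $r^n$ shows they share all moments $\langle\psi,\hA^n\psi\rangle$, hence the same measures $\mu_\psi$ and the same projections; right-continuity fixes the value at every $r$, including at jumps, so the spectral family is unique. The converse direction is the reverse construction: a right-continuous spectral family $(\hE_r)_{r\in\bbR}$ extends (using right-continuity) to a PVM $P$ on $\BR$ with $P((-\infty,r])=\hE_r$, and $\hA:=\int r\,dP(r)$ on $\{\psi\mid\int r^2\,d\langle\psi,\hE_r\psi\rangle<\infty\}$ is self-adjoint with spectral family $E$. Finally, for \emph{unbounded} self-adjoint $\hA$ I would pass to the Cayley transform $U=(\hA-i)(\hA+i)^{-1}$, which is unitary with $\hat1-U$ injective, apply the bounded (normal) spectral theorem to $U$, and pull back via $\hA=i(\hat1+U)(\hat1-U)^{-1}$ to obtain $E$ on $\bbR$. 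The main obstacle I expect lies exactly here: one must track domains so that $\hA=\int r\,d\hE_r$ holds on $\mathrm{dom}(\hA)=\{\psi\mid\int r^2\,d\langle\psi,\hE_r\psi\rangle<\infty\}$, and the claimed `norm-convergence of approximating Riemann sums' must be interpreted in the strong sense on this domain (with truncation of the range of integration) rather than in operator norm as in the bounded case.
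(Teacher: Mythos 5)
The paper does not prove this statement at all: it is the classical spectral theorem, quoted as standard background (with references such as \cite{KR83+86,Tak79+02}) and used as a black box for everything that follows. So there is no proof in the paper to compare against; what you have written is a correct outline of the standard textbook argument, and it is the natural one. Your route --- continuous functional calculus via Stone--Weierstrass, extension to the Borel calculus via Riesz--Markov and polarisation, the projection-valued measure $\Omega\mt\chi_\Omega(\hA)$ with $\hE_r=\chi_{(-\infty,r]}(\hA)$, the mesh estimate for Riemann sums in the bounded case, a moment argument for uniqueness, and von Neumann's Cayley-transform reduction for unbounded operators --- is essentially the canonical proof. Two small points of precision: the bounded Borel calculus is contractive rather than isometric (the contraction bound is all your mesh estimate needs, but you should not call it isometry), and your uniqueness step tacitly uses $\hA^n=\int r^n\,d\hE_r$, which requires the orthogonality of spectral increments $(\hE_{r_k}-\hE_{r_{k-1}})(\hE_{r_l}-\hE_{r_{l-1}})=\delta_{kl}(\hE_{r_k}-\hE_{r_{k-1}})$ to multiply Riemann sums; this is routine but should be said. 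Your closing caveat is well taken and is in fact a (minor) imprecision in the paper's own statement: for unbounded $\hA$ the integral $\int_{-\infty}^{\infty}r\,d\hE_r$ cannot converge in operator norm, and must be interpreted strongly on the domain $\{\psi\mid\int r^2\,d\langle\psi,\hE_r\psi\rangle<\infty\}$. This does not affect the paper, since all later constructions use only the order-theoretic bijection $\hA\leftrightarrow\EA$ and never the mode of convergence of the integral.
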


The spectral theorem holds analogously for left-continous spectral families.

\begin{remark}
A self-adjoint operator $\hA$ on $\cH$ is \emph{bounded (bounded from below/above)} if and only if its spectral family $\EA$ is bounded (bounded from below/above). $\hA$ is contained in the von Neumann algebra $\BH$ if and only if $\hA$ is bounded. If $\cN\subset\BH$ is some von Neumann (sub)algebra, then $\hA\in\cN$ if and only if the image of $\EA$ is in $\PN$ and $\hA$ is bounded.
\end{remark} 

\begin{definition}
If the image of $\EA$ is in $\PN$, but $\hA$ is not necessarily bounded, then we say that $\hA$ is \emph{affiliated with $\cN$}. The set of self-adjoint operators in a von Neumann algebra $\cN$ is denoted $\cN_{\sa}$. The set of self-adjoint operators affiliated with $\cN$ is denoted $SA(\cN)$ (and so $\cN_{\sa}\subset SA(\cN)$).
\end{definition}

In the following, we will simply speak of self-adjoint operators affiliated with a von Neumann algebra $\cN$, implicitly understanding that if such an operator $\hA$ is bounded then it lies in $\cN_{\sa}$. For more details on affiliated operators, see e.g. section 5.6 in \cite{KR83+86}. 

If $\hA$ is a bounded self-adjoint operator, then it is defined on all of $\cH$, while unbounded operators have domains of definition properly smaller than $\cH$. If $\hA$ is affililated with a von Neumann algebra $\cN\subseteq\BH$, then $\hA$ is a closed operator and is defined on a dense subset $\mc D(\hA)$ of $\cH$. We will not consider any questions relating to domains of definition of unbounded operators in this article, and we only need the concept in the following definition:
\begin{definition}			\label{Def_SpecOfSelfAdjOp}
Let $\hA$ be a self-adjoint operator affiliated with a von Neumann algebra $\cN$, and let $\mc D(\hA)\subseteq\cH$ be its (dense) domain of definition. The \emph{spectrum $\sp\hA$ of $\hA$} consists of those real numbers $s$ for which $\hA-s\hat 1$ is not a bijective mapping from $\mc D(\hA)$ to $\cH$.
\end{definition}
It is well known that $\sp\hA$ is non-empty and consists of those real numbers $s\in\bbR$ for which $\EA=(\EA_r)_{r\in\bbR}$ is not constant on any open neighbourhood $U$ of $s$. If $\hA$ is bounded, then $\sp\hA$ is a compact subset of $\bbR$.

We now define the spectral order, which will play a central role:
\begin{definition}
The \emph{spectral order} on the set $SA(\cN)$ of self-adjoint operators affiliated with a von Neumann algebra $\cN$ is defined by
\begin{equation}			\label{Def_SpecOrder}
			\forall \hA,\hB\in SA(\cN) : \hA\leq_s\hB\quad :\Longleftrightarrow \quad\forall r\in\bbR: \hEA_r\geq\hE^{\hB}_r,
\end{equation}
where $\EA=(\hEA_r)_{r\in\bbR}$ and $\hE^{\hat B}=(\hE^{\hat B}_r)_{r\in\bbR}$ are the right-continuous spectral families of $\hA$ resp. $\hB$, and where on the right-hand side, the usual order on projections is used.
\end{definition}

Of course, the spectral order can be restricted to $\cN_{\sa}$. This order was originally introduced by Olson \cite{Ols71}; see also \cite{deG04}. For some recent results on the spectral order for unbounded operators see \cite{PlaSto12}.

With respect to the spectral order, the set $SA(\cN)$ (resp. $\cN_{\sa}$) is a conditionally complete lattice, i.e., each family $(\hA_i)_{i\in I}$ in $SA(\cN)$ (resp. $\cN_{\sa}$) that has a lower bound has a greatest lower bound $\bmeet_{i\in I} \hA_i$ in $SA(\cN)$ (resp. $\cN_{\sa}$), and if the family has an upper bound, then it has a least upper bound $\bjoin_{i\in I} \hA_i$ in $SA(\cN)$ (resp. $\cN_{\sa}$). This is in marked contrast to the usual linear order that is given as
\begin{equation}
			\forall \hA,\hB\in SA(\cN) : \hA\leq\hB\quad :\Longleftrightarrow \quad\hB-\hA\text{ is positive.}
\end{equation}
As Kadison showed \cite{Kad51}, for a nonabelian von Neumann algebra $\cN$ the meet $\hA\meet\hB$ of two self-adjoint operators in $\cN$ exists if and only if $\hA$ and $\hB$ are comparable with respect to the linear order, i.e., if either $\hA\leq\hB$ or $\hB\leq\hA$. Hence, $\cN_{\sa}$ equipped with the linear order $\leq$ is very far from being a lattice; Kadison called $(\cN_{\sa},\leq)$ an \emph{anti-lattice}.

Some further facts about the spectral order (for proofs see \cite{Ols71,deG04}):
\begin{itemize}
	\item [(a)] On projections, the spectral order and the usual linear order coincide, so the spectral order generalises the partial order on projections.
	\item [(b)] On commuting operators, the spectral order and the usual linear order coincide, so for abelian von Neumann algebras, the linear order and the spectral order coincide.
	\item [(c)] For all $\hA,\hB\in SA(\cN)$, $\hA\leq_s\hB$ implies $\hA\leq\hB$, but not vice versa. In fact,  $\hA\leq_s\hB$ if and only if $\hA^n\leq\hB^n$ for all $n\in\bbN$.
	\item [(d)] The spectral order does not make $SA(\cN)$ (or $\cN_{\sa}$) into a vector lattice, that is, $\hA\leq_s\hB$ does not necessarily imply $\hA+\hat C\leq_s\hB+\hat C$ (unless $\cN$ is abelian).
\end{itemize}
Property (d) can be seen as an `incompatibility' between the linear structure and the order structure on $SA(\cN)$ (and on $\cN_{\sa}$). In this article, we will focus on the order structure provided by the spectral order.

\section{Definition and basic properties of $q$-observable functions}			\label{Sec_DefAndBasicProperties}


\begin{definition}
Let $\cN$ be a von Neumann algebra. A map
\begin{align}
			E:\eR &\lra \PN\\			\nonumber
			r &\lmt \hE_r
\end{align} 
such that
\begin{itemize}
	\item [(a)] $\hE_{-\infty}=\hat 0$ and $\hE_{\infty}=\hat 1$,
	\item [(b)] $E|_{\bbR}$ is a spectral family
\end{itemize}
is called an \emph{extended spectral family}. We will use both the notations $E:\eR\ra\PN$ and $(\hE_r)_{r\in\eR}$ for an extended spectral family.
\end{definition}

\begin{remark}
Obviously, every spectral family $E:\bbR\ra\PN$ determines a unique extended spectral family $E:\eR\ra\PN$ and vice versa, so there is a canonical bijection between spectral families (defined on $\bbR$) and extended spectral families (defined on $\eR$). If $E:\bbR\ra\PN$ is right-continuous, then its extension $E:\eR\ra\PN$ is also right-continuous. In the following, we will mostly work with extended right-continuous spectral families.
\end{remark}

The spectral theorem shows that there is a bijection between $SA(\cN)$, the set of self-adjoint operators affiliated with a given von Neumann algebra $\cN$, and the set $SF(\eR,\PN)$ of extended, right-continuous spectral families in $\PN$. 

\begin{lemma}
If we regard an extended right-continuous spectral family as a monotone function
\begin{align}
			E:\eR &\lra \PN\\			\nonumber
			r &\lmt \hE_r,
\end{align}
then $E$ preserves all meets, so it is a morphism of complete meet-semilattices. Conversely, any meet-preserving map $E:\eR\ra\PN$ with the properties 
\begin{itemize}
	\item [(a)] $E(-\infty)=\hat 0$,
	\item [(b)] $\bjoin_{r\in\bbR} E(r)=\hat 1$
\end{itemize}
determines an extended right-continuous spectral family.
\end{lemma}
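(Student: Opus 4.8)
The plan is to treat the two implications separately, keeping in mind that meets in $\eR$ are infima (with the empty meet being the top element $\infty$) while meets in $\PN$ are the usual lattice meets. The forward implication---that an extended right-continuous spectral family preserves all meets---is the substantive half; the converse is a short unwinding of meet-preservation evaluated at a few distinguished infima in $\eR$.

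For the forward direction, fix a family $(r_i)_{i\in I}\subseteq\eR$ and set $r:=\bmeet_{i\in I} r_i=\inf_{i\in I} r_i$. Since $E$ is monotone and $r\leq r_i$ for each $i$, I immediately get $E(r)\leq\bmeet_{i\in I}E(r_i)$, so only the reverse inequality requires work. If the infimum is attained, say $r=r_{i_0}$, then $\bmeet_i E(r_i)\leq E(r_{i_0})=E(r)$ and we are done; the cases $r=\infty$ (which forces every $r_i=\infty$) and $I=\emptyset$ are likewise trivial. The interesting case is a non-attained infimum, so $r<r_i$ for all $i$. Here the key move is: for every real $s>r$ there exists $i$ with $r_i<s$ (otherwise $s$ would be a lower bound strictly above $r$), whence $\bmeet_j E(r_j)\leq E(r_i)\leq E(s)$ by monotonicity, and therefore $\bmeet_j E(r_j)\leq\bmeet_{s>r}E(s)$. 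I would then split on the value of $r$: if $r\in\bbR$, right-continuity (iii) gives $\bmeet_{s>r}E(s)=E(r)$; if $r=-\infty$, the range $s>r$ is all of $\bbR$ and axiom (ii) gives $\bmeet_{s\in\bbR}E(s)=\hat 0=E(-\infty)$. Either way the reverse inequality holds and $E(\inf_i r_i)=\bmeet_i E(r_i)$.

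For the converse, suppose $E:\eR\ra\PN$ preserves all meets and satisfies (a) and (b). Meet-preservation forces monotonicity, since $r\leq s$ implies $E(r)=E(r\meet s)=E(r)\meet E(s)\leq E(s)$, which is axiom (i). Applying $E$ to the empty meet $\bmeet\emptyset=\infty$ yields $E(\infty)=\bmeet\emptyset=\hat 1$, which together with the hypothesis $E(-\infty)=\hat 0$ gives the normalisation $\hE_{-\infty}=\hat 0$, $\hE_{\infty}=\hat 1$ of an extended spectral family. The remaining spectral-family axioms fall out by evaluating $E$ at explicit infima: from $\bmeet_{r\in\bbR}r=-\infty$ I get $\bmeet_{r\in\bbR}E(r)=E(-\infty)=\hat 0$; the join condition $\bjoin_{r\in\bbR}E(r)=\hat 1$ is exactly hypothesis (b); and from $\bmeet_{s>r}s=r$ for each $r\in\bbR$ I get $\bmeet_{s>r}E(s)=E(r)$, which is right-continuity (iii). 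Hence $E|_{\bbR}$ is a right-continuous spectral family.

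The only place where anything beyond formal manipulation is needed is the reverse inequality in the forward direction, and specifically recognising that the single clause ``for each $s>r$ some $r_i<s$'' reduces the whole statement to the two axioms (ii) and (iii)---with (iii) handling a finite limit point and (ii) handling the limit at $-\infty$. Everything else, including the entire converse, is bookkeeping obtained by applying the meet-preservation hypothesis to the three infima $-\infty=\bmeet_{r\in\bbR}r$, $\infty=\bmeet\emptyset$, and $r=\bmeet_{s>r}s$.
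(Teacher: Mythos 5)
Your proof is correct and follows essentially the same route as the paper's: the forward direction reduces the meet $\bmeet_i E(r_i)$ to $\bmeet_{s>\inf_i r_i}E(s)$ via monotonicity and then invokes right-continuity (with axiom (ii) covering the $-\infty$ case), and the converse evaluates meet-preservation at the distinguished infima $\bmeet\emptyset=\infty$, $\bmeet_{r\in\bbR}r=-\infty$, and $\bmeet_{s>r}s=r$. The only difference is that you spell out the attained/non-attained and $\pm\infty$ case distinctions that the paper's one-line chain of equalities leaves implicit.
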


\begin{proof}
Let $E:\eR\ra\PN$ be an extended right-continuous spectral family, and let $(r_i)_{i\in I}\subseteq\eR$ be an arbitrary family of elements of $\eR$. Then
\begin{align}
			E(\inf_{i\in I} r_i) = \hE_{\inf_{i\in I} r_i} = \bmeet_{s>\inf_{i\in I} r_i} \hE_s = \bmeet_{i\in I} \hE_{r_i},
\end{align}
where the second equality is due to right-continuity of $E$ and the third is due to monotonicity.

Conversely, if we have a meet-preserving map $E:\eR\ra\PN$ satisfying conditions (a) and (b), then clearly $E$ is monotone and (writing $\hE_r:=E(r)$) we have
\begin{equation}
			\forall r\in\bbR: \hE_r=\hE_{\inf_{s>r} s}=\bmeet_{s>r}\hE_s,
\end{equation}
so $E$ is right-continuous. Moreover,
\begin{equation}
			\bmeet_{r\in\bbR} \hE_r = \hE_{\inf_{r\in\bbR} r} = E(-\infty) = \hat 0,
\end{equation}
where we used meet-preservation of $E$ in the second step and assumption (a) in the last step. Also by meet-preservation, we obtain
\begin{equation}
			E(\infty)=E(\inf\emptyset)=\bmeet\emptyset=\hat 1,
\end{equation}
since the empty meet is a terminal object (in the sense of category theory, where meets are products), which in a poset is the top element. Together with assumption (b), this shows that $E:\eR\ra\PN$ is an extended right-continuous spectral family.
\end{proof}

By the adjoint functor theorem for posets, an extended spectral family $E$ has a left adjoint
\begin{equation}
			\oE:\PN \lra \eR
\end{equation}
that preserves arbitrary joins, i.e., for all families $(\hP_i)_{i\in I}\subseteq\PN$,
\begin{equation}
			\oE(\bjoin_{i\in I}\hP_i)=\sup_{i\in I} \oE(\hP_i).
\end{equation}
Note that `adjoint' is used here in the categorical sense, not the operator-theoretic one. (All operators that we consider are self-adjoint.)

\begin{definition}
If $\hA$ is a self-adjoint operator affiliated with a von Neumann algebra $\cN$ and $\EA=(\hEA_r)_{r\in\eR}$ is its extended right-continuous spectral family, then the left adjoint $o^{\EA}$ of $\EA$ is denoted $\oA:\PN\ra\eR$ and is called the \emph{$q$-observable function associated with the self-adjoint operator $\hA$}.
\end{definition}
The adjoint functor theorem provides the explicit form of the function $\oE$ adjoint to an extended right-continuous spectral family $E=(\hE_r)_{r\in\eR}$ in $\PN$: for all $\hP\in\PN$,
\begin{equation}
			\oE(\hP)=\inf\{r\in\eR \mid \hP\leq\hE_r\}.
\end{equation}
If $E=\EA$, then
\begin{equation}
			\oA(\hP)=\inf\{r\in\eR \mid \hP\leq\hEA_r\}.
\end{equation}
\begin{remark}
A very similar sort of functions associated with self-adjoint operators was defined by de Groote in \cite{deG01}, Prop. 6.2, but restricted to rank-$1$ projections (i.e., on projective Hilbert space $\mathbb P\cH$). He called these functions `observable functions', and we adapt this naming in order to show our indebtedness to him. In \cite{deG05}, Def. 2.7 and following arguments, de Groote arrived at a definition very close to the one above, considering join-preserving functions from the non-zero projections to the reals. Comman considers the functions defined above in \cite{Com06} and proves a number of their properties. It seems that his results were found independently of de Groote's earlier work.
\end{remark}

Yet, the observation that right-continuous spectral families $E$ (resp. $\EA$) have left adjoints $\oE$ (resp. $\oA$), which are exactly the $q$-observable functions, is new as far as we are aware. The definition via Galois connections is the most natural one, allows proving the properties of these functions in a direct and transparent way, and opens up new perspectives. For example, we already saw that preservation of arbitrary joins is an immediate consequence of the adjoint functor theorem for posets.

We have shown so far that for every each $\hA\in SA(\cN)$,
\begin{equation}
			(\oA,\EA)
\end{equation}
is a Galois connection between the complete lattices $\PN$ and $\eR$. Also note that for all non-zero projections $\hP\in\PN$,
\begin{equation}
			\oA(\hP)=\inf\{r\in\eR \mid \hP\leq\hEA_r\}>-\infty.
\end{equation}
If $\hA$ is unbounded from below, then the image of $\oA$ on non-zero projections is unbounded from below: let $r_0$ be an element in the spectrum of $\hA$, then $\oA(\hEA_{r_0})=r_0$. (This fact will also be used below in the proof of Lemma \ref{Lem_im(oA)=spA}.) For the same reason, if $\hA$ is unbounded from above, then the image of $\oA$ is unbounded from above. In fact, there exist projections $\hP\in\PN$ such that
\begin{equation}
			\oA(\hP)=\inf\{r\in\eR \mid \hP\leq\hEA_r\}=\infty.
\end{equation}
For example, the identity $\hat 1$ is such a projection: if $\hA$ is unbounded from above, then $\hat 1$ is not in the usual spectral family of $\hA$ defined over $\bbR$.

It always holds that $\oA(\hat 0)=-\infty$, whether $\hA$ is bounded or not.

Of course, for any self-adjoint operator $\hA$ affiliated with $\cN$, there exist projections $\hP$ such that $\oA(\hP)$ is some finite real number: take $\hP=\hEA_{r_0}$ for some $r_0\in\sp\hA$ (this is always possible, since the spectrum of a self-adjoint operator is not empty), then $\oA(\hEA_{r_0})=\inf\{r\in\eR \mid \hEA_{r}\geq\hEA_{r_0}\}=r_0$.

We will now characterise those functions $o:\PN\ra\eR$ that determine extended spectral families by means of a Galois connection, and hence determine self-adjoint operators affiliated with the von Neumann algebra $\cN$.

\begin{definition}			\label{Def_AbsObsFct}
A \emph{weak $q$-observable function} is a join-preserving function $o:\PN\ra\eR$ such that
\begin{itemize}
	\item [(a)] $o(\hP)>-\infty$ for all $\hP>\hat 0$.
\end{itemize}
An \emph{abstract $q$-observable function on $\PN$} is a weak $q$-observable function such that
\begin{itemize}
	\item [(b)] there is a family $(\hP_i)_{i\in I}\subseteq\PN$ with $\bjoin_{i\in I}\hP_i=\hat 1$ such that $o(\hP_i)\lneq\infty$ for all $i\in I$.
\end{itemize}
The set of abstract $q$-observable functions on $\PN$ is denoted $QO(\PN,\eR)$.
\end{definition}

Let $o$ be a weak or an abstract $q$-observable function. Note that join-preservation implies
\begin{equation}			\label{Eq_o(0)=-infty}
			o(\hat 0)=o(\bjoin\emptyset)=\sup\emptyset=-\infty.
\end{equation}

\begin{proposition}			\label{Prop_ObsFcts=SpecFams}
Let $\cN$ be a von Neumann algebra. There is a bijection between the set $SF(\eR,\PN)$ of extended right-continuous spectral families in $\PN$ and the set $QO(\PN,\eR)$ of abstract $q$-observable functions on $\PN$. Concretely, each abstract $q$-observable function $o$ has a right adjoint $\Eo$, which is an extended right-continuous spectral family, and each extended right-continuous spectral family $E$ has a left adjoint $\oE$, which is an abstract $q$-observable function. Moreover, $E^{\oE}=E$ and $o^{\Eo}=o$ for all $E\in SF(\eR,\PN)$ and all $o\in QO(\PN,\eR)$.
\end{proposition}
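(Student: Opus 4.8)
The plan is to produce the two assignments $E\mapsto\oE$ and $o\mapsto\Eo$ explicitly, check that each lands in the required class, and then deduce that they are mutually inverse purely from the uniqueness of adjoints recorded after equation \eq{Def_GaloisConnection}. Since both $\PN$ and $\eR$ are complete lattices, the adjoint functor theorem for posets (Theorem \ref{Thm_AdjFunctorThmForPosets}) applies in both directions, so existence of the adjoints is automatic; the actual work lies in verifying that conditions (a) and (b) are respected when passing to the adjoint.

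For the first direction, I start with $E\in SF(\eR,\PN)$. By the lemma characterising extended right-continuous spectral families as meet-preserving maps, $E$ preserves all meets, so by Theorem \ref{Thm_AdjFunctorThmForPosets} it has a left adjoint $\oE:\PN\ra\eR$, $\oE(\hP)=\inf\{r\in\eR\mid\hP\leq E(r)\}$, which preserves all joins. I would then check the two conditions of Definition \ref{Def_AbsObsFct}. For (a), suppose $\hP>\hat 0$ yet $\oE(\hP)=-\infty$; then the defining set is unbounded below, so there are $r_n\to-\infty$ with $\hP\leq E(r_n)$, and meet-preservation forces $\hP\leq\bmeet_n E(r_n)=E(-\infty)=\hat 0$, a contradiction. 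For (b), the family $(E(r))_{r\in\bbR}$ has join $\hat 1$ and satisfies $\oE(E(r))\leq r<\infty$, so it witnesses (b). Hence $\oE\in QO(\PN,\eR)$.

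For the second direction, I start with $o\in QO(\PN,\eR)$. As $o$ preserves joins, Theorem \ref{Thm_AdjFunctorThmForPosets} furnishes a right adjoint $\Eo:\eR\ra\PN$, $\Eo(r)=\bjoin\{\hP\in\PN\mid o(\hP)\leq r\}$, which preserves all meets. To conclude $\Eo\in SF(\eR,\PN)$ it then suffices, by the converse half of the same characterising lemma, to verify $\Eo(-\infty)=\hat 0$ and $\bjoin_{r\in\bbR}\Eo(r)=\hat 1$. The former holds because condition (a) of Definition \ref{Def_AbsObsFct} together with \eq{Eq_o(0)=-infty} gives $\{\hP\mid o(\hP)=-\infty\}=\{\hat 0\}$, whose join is $\hat 0$; the latter holds because $\bjoin_{r\in\bbR}\Eo(r)=\bjoin\{\hP\mid o(\hP)<\infty\}$ dominates the family provided by condition (b), whose join is already $\hat 1$.

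It remains to see that the two assignments are inverse. By construction $(\oE,E)$ and $(o,\Eo)$ are Galois connections, so $E$ is a right adjoint of $\oE$ and $o$ is a left adjoint of $\Eo$. Since right adjoints of join-preserving maps and left adjoints of meet-preserving maps are unique, we obtain $E^{\oE}=E$ and $o^{\Eo}=o$, establishing the bijection. I expect the only genuinely non-formal step to be condition (a) in the first direction, i.e.\ ruling out $\oE(\hP)=-\infty$ for $\hP>\hat 0$: this is the sole place where right-continuity (meet-preservation) of $E$ is used essentially rather than as bookkeeping, whereas everything else is a direct unwinding of the adjoint functor theorem and the characterising lemma.
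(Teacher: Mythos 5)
Your proposal is correct and follows essentially the same route as the paper's proof: apply the adjoint functor theorem for posets in both directions, verify conditions (a) and (b) of Definition \ref{Def_AbsObsFct} (resp.\ the two conditions of the characterising lemma for meet-preserving maps), and conclude $E^{\oE}=E$, $o^{\Eo}=o$ from uniqueness of adjoints. The only cosmetic difference is that where the paper invokes the unit inequality $\Eo\circ o\geq\id_{\PN}$ and the counit inequality $\oE\circ E\leq\id_{\eR}$, you unwind the explicit formulas $\Eo(r)=\bjoin\{\hP\mid o(\hP)\leq r\}$ and $\oE(\hP)=\inf\{r\mid\hP\leq E(r)\}$ directly, which amounts to the same thing.
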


\begin{proof}
Let $o:\PN\ra\eR$ be an abstract $q$-observable function. The adjoint functor theorem for posets shows that $o$ has a right adjoint $\Eo$ that preserves arbitrary meets (greatest lower bounds). In particular, for all $r\in\eR$, we have $\Eo(r)=\Eo(\inf\{s\in\eR \mid r<s\})=\bmeet_{s>r}\Eo(s)$, so $\Eo$ is right-continuous. Join-preservation of $o$ implies $o(\hat 0)=-\infty$, see \eq{Eq_o(0)=-infty}. From this and $o(\hP)>\hat 0$ for all $\hP>0$ (which is condition (a) in Def. \ref{Def_AbsObsFct}), we obtain
\begin{equation}			\label{Eq_Eo(-infty)=0}
			\Eo(-\infty)=\bjoin\{\hP\in\PN \mid o(\hP)\leq-\infty\}=\hat 0.
\end{equation}

It remains to show that $\bjoin_{r\in\bbR}\Eo(r)=\hat 1$. We have
\begin{align}
			&\bjoin_{r\in\bbR}\Eo(r) \stackrel{\text{Def. \ref{Def_AbsObsFct}, Cond. (b)}}{\geq} \bjoin_{r\in\{o(\hP_i) \mid i\in I\}}\Eo(r) = \bjoin_{i\in I}\Eo(o(\hP_i)) \geq \bjoin_{i\in I}\hP_i = \hat 1,
\end{align}
where in the last step we used the fact that for the unit of the adjunction, we have $\Eo\circ o\geq\id_{\PN}$ (see Rem. \ref{Rem_UnitCounitComposite}). Note that $\bjoin_{r\in\bbR}\Eo(r)=\hat 1$ is trivial if $o(\hat 1)\in\bbR$, since $o(\hat 1)$ is the maximum of the image of $o$ due to monotonicity, and so for all $t\in\eR$ with $t\geq o(\hat 1)$, we have $\Eo(t)=\bjoin\{\hP\in\PN \mid o(\hP)\leq t\}=\hat 1$. Also, if $o(\hat 1)\in\bbR$, condition (b) in Def. \ref{Def_AbsObsFct} is fulfilled for the family $\{\hat 1\}$.

Conversely, let $E:\eR\ra\PN$ be an extended right-continuous spectral family. Its left adjoint $\oE$ is the (concrete) $q$-observable function of $\hA^E$, the self-adjoint operator affiliated with $\cN$ that is determined by $E$. Being a left adjoint, $\oE$ preserves all joins. If $\hP>\hat 0$, then $\oE(\hP)=\inf\{r\in\eR \mid \hP\leq E(r)\}>-\infty$, so $\oE$ has property (a) in Def. \ref{Def_AbsObsFct}. Moreover, $E|_{\bbR}=(E(r))_{r\in\bbR}$ is a family of projections with $\bjoin_{r\in\bbR}E(r)=\hat 1$, and we have
\begin{equation}
			\forall r\in\bbR: \oE(E(r))\leq r\lneq\infty,
\end{equation}
because for the counit $\oE\circ E$ of the adjunction, it holds that $\oE\circ E\leq\id_{\eR}$ (see Rem. \ref{Rem_UnitCounitComposite}). Hence, condition (b) in Def. \ref{Def_AbsObsFct} is fulfilled for the family $(E(r))_{r\in\bbR}$. In fact, one can always use a countable family, e.g. pick $(E(n))_{n\in\bbN}$. If $\oE(\hat 1)\in\bbR$, then the trivial family $\{\hat 1\}$ can be used.

The fact that we are using an adjunction (that is, a Galois connection between posets) implies that $E^{\oE}=E$ and $o^{\Eo}=o$ for all $E\in SF(\eR,\PN)$ and all $o\in QO(\PN,\eR)$, since adjoints are unique.
\end{proof}

This leads us to our representation of self-adjoint operators as real-valued functions:

\begin{theorem}			\label{Thm_SA(N)=QO}
Let $\cN$ be a von Neumann algebra. There is a bijection between the set $SA(\cN)$ of self-adjoint operators affiliated with $\cN$ and the set $QO(\PN,\eR)$ of abstract $q$-observable functions on $\PN$.
\end{theorem}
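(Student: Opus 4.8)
The plan is to obtain the asserted bijection as the composite of two bijections that are already in hand, so that essentially no new work is required. By the spectral theorem, together with the canonical correspondence between spectral families defined on $\bbR$ and extended spectral families defined on $\eR$, there is a bijection $SA(\cN)\ra SF(\eR,\PN)$ sending each self-adjoint operator $\hA$ affiliated with $\cN$ to its extended right-continuous spectral family $\EA$; this was recorded in the discussion preceding Prop. \ref{Prop_ObsFcts=SpecFams}. On the other side, Prop. \ref{Prop_ObsFcts=SpecFams} furnishes a bijection $SF(\eR,\PN)\ra QO(\PN,\eR)$ sending an extended right-continuous spectral family $E$ to its left adjoint $\oE$, with inverse sending an abstract $q$-observable function $o$ to its right adjoint $\Eo$.

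First I would compose these two maps, defining $\Phi:SA(\cN)\ra QO(\PN,\eR)$ by $\hA\mt o^{\EA}=\oA$, the $q$-observable function associated with $\hA$. Since $\Phi$ is a composite of bijections, it is itself a bijection, and this is precisely the correspondence claimed in the theorem. Explicitly, the inverse $\Phi^{-1}$ sends an abstract $q$-observable function $o$ first to its right adjoint $\Eo\in SF(\eR,\PN)$ and then, via the spectral theorem, to the unique self-adjoint operator affiliated with $\cN$ whose extended right-continuous spectral family is $\Eo$.

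There is no genuine obstacle remaining at this stage. All of the order-theoretic content --- that $\oE$ is join-preserving and satisfies conditions (a) and (b) of Def. \ref{Def_AbsObsFct}, that $\Eo$ is again an extended right-continuous spectral family, and that $E^{\oE}=E$ and $o^{\Eo}=o$ --- has been established in Prop. \ref{Prop_ObsFcts=SpecFams}, while the correspondence between affiliated self-adjoint operators and extended right-continuous spectral families is exactly the spectral theorem. The only point worth emphasising is bookkeeping: one should check that the two bijections compose cleanly, that is, that the extended spectral family assigned to $\hA$ by the spectral theorem is the same object that Prop. \ref{Prop_ObsFcts=SpecFams} feeds into its left-adjoint construction. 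This is immediate from the definitions of $\EA$ and $\oA$, so the theorem is really a repackaging of the two preceding results rather than a statement requiring a fresh argument.
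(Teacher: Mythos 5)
Your proposal is correct and follows exactly the paper's own argument: the bijection is obtained by composing the spectral-theorem correspondence $SA(\cN)\cong SF(\eR,\PN)$, $\hA\mt\EA$, with the adjunction correspondence $SF(\eR,\PN)\cong QO(\PN,\eR)$ of Prop.~\ref{Prop_ObsFcts=SpecFams}, so the theorem is indeed just a repackaging of the two preceding results. No gaps; your extra remark on checking that the two bijections compose cleanly is harmless bookkeeping that the paper leaves implicit.
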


\begin{proof}
Each $\hA\in SA(\cN)$ determines a unique extended right-continuous spectral family $\EA$ in $\PN$ by the spectral theorem. By Prop. \ref{Prop_ObsFcts=SpecFams}, $\EA$ determines a unique abstract $q$-observable function $\oA=o^{\EA}$. Conversely, each abstract $q$-observable function $o$ determines a unique extended right-continuous spectral family $\Eo$ and hence a unique self-adjoint operator $\hA^{\Eo}$.
\end{proof}

In other words, every abstract $q$-observable function $o$ is the $q$-observable function $\oA$ of some self-adjoint operator $\hA$ affiliated with $\cN$, and every self-adjoint operator $\hA$ affiliated with $\cN$ also determines an abstract $q$-observable function $\oA$. Hence, we will speak simply of $q$-observable functions in the following.

Summing up, we have three bijections:
\begin{align}
			&SA(\cN)\cong SF(\eR,\PN),\quad \hA\mt\EA,\quad \EA\mt\int_{-\infty}^{\infty} r\;d\hEA_r;\\
			&SF(\eR,\PN)\cong QO(\PN,\eR),\quad E\mt\oE,\quad o\mt\Eo;\\
			&SA(\cN)\cong QO(\PN,\eR),\quad \hA\mt o^{\hA},\quad o\mt\int_{-\infty}^{\infty} r\;d\Eo(r).
\end{align}

\subsection{Representation of the lattice of self-adjoint operators} 
We now consider more concretely how the spectral order relates to our constructions. In particular, the spectral order on self-adjoint operators corresponds to the pointwise order on the associated $q$-observable functions:
\begin{proposition}			\label{Prop_OrderIso}
Let $(SA(\cN),\leq_s)$ be the poset of self-adjoint operators affiliated with $\cN$, equipped with the spectral order $\leq_s$, and let \mbox{$(QO(\PN,\eR),\leq)$} be the poset of $q$-observable functions, equipped with the pointwise order. The map
\begin{align}			\label{E_SANleqs=QOleq}
			\phi: (SA(\cN),\leq_s) &\lra (QO(\PN,\eR),\leq)\\			\nonumber
			\hA &\lmt \oA
\end{align}
is an order-isomorphism of conditionally complete lattices.
\end{proposition}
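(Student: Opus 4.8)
The plan is to lean on the bijection already established in Theorem~\ref{Thm_SA(N)=QO} and reduce the whole statement to a single duality principle for Galois connections. Since $\phi$ is a bijection, I would only need to show that $\phi$ and $\phi^{-1}$ are both monotone, i.e.\ that
\[
			\hA\leq_s\hB \quad\Longleftrightarrow\quad \oA\leq o^{\hB}\ \text{pointwise on }\PN,
\]
and then observe that an order-isomorphism automatically transports the conditionally complete lattice structure from one side to the other. By the definition of the spectral order \eq{Def_SpecOrder}, $\hA\leq_s\hB$ says precisely that $\EA(r)\geq E^{\hB}(r)$ for all $r\in\bbR$; and because $\EA(-\infty)=E^{\hB}(-\infty)=\hat 0$ and $\EA(\infty)=E^{\hB}(\infty)=\hat 1$, this is the same as the pointwise inequality $\EA\geq E^{\hB}$ of extended spectral families on all of $\eR$. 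So the real task is to show that reversing the pointwise order on the right adjoints matches the pointwise order on their left adjoints.

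This is the key step, and it is a general fact about two Galois connections between the same pair of posets. Suppose $(o_1,E_1)$ and $(o_2,E_2)$ are Galois connections with $o_i:\PN\ra\eR$ as left adjoints and $E_i:\eR\ra\PN$ as right adjoints. I would prove that $o_1\leq o_2$ pointwise if and only if $E_1\geq E_2$ pointwise. For the forward direction, assuming $o_1\leq o_2$ and fixing $r\in\eR$, the counit inequality of Remark~\ref{Rem_UnitCounitComposite} gives $o_2(E_2(r))\leq r$, so $o_1(E_2(r))\leq o_2(E_2(r))\leq r$; feeding this into the adjunction \eq{Def_GaloisConnection} for $(o_1,E_1)$ yields $E_2(r)\leq E_1(r)$. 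The converse is symmetric, using the unit inequality $E_2(o_2(\hP))\geq\hP$ together with the adjunction for $(o_1,E_1)$. Specialising to $o_i\in\{\oA,o^{\hB}\}$ and $E_i\in\{\EA,E^{\hB}\}$ gives $\oA\leq o^{\hB}\Leftrightarrow\EA\geq E^{\hB}$, which combined with the previous paragraph establishes $\hA\leq_s\hB\Leftrightarrow\oA\leq o^{\hB}$ and hence that $\phi$ is an order-isomorphism of posets.

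To finish, I would invoke the fact, recorded in the preliminaries, that $(SA(\cN),\leq_s)$ is a conditionally complete lattice. An order-isomorphism preserves and reflects all meets and joins that exist, so $(QO(\PN,\eR),\leq)$ inherits exactly the same completeness: a family of $q$-observable functions bounded below (resp.\ above) in the pointwise order has a greatest lower bound (resp.\ least upper bound), and $\phi$ carries the spectral meets and joins of self-adjoint operators to the pointwise meets and joins of the associated $q$-observable functions.

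I expect the only genuine content to be the Galois-duality lemma of the middle paragraph, and even this is immediate from the unit/counit inequalities of Remark~\ref{Rem_UnitCounitComposite}. The one place warranting care is the mismatch of index sets---the spectral order quantifies over $r\in\bbR$ whereas the adjunctions live over $\eR$---but this is harmless, since the extended spectral families are pinned to $\hat 0$ at $-\infty$ and to $\hat 1$ at $+\infty$, so the comparison over $\bbR$ and the comparison over $\eR$ coincide.
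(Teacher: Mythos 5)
Your proposal is correct, and its overall skeleton matches the paper's: establish the equivalence $\hA\leq_s\hB\Leftrightarrow\oA\leq o^{\hB}$ and combine it with the bijection of Theorem \ref{Thm_SA(N)=QO}. The difference lies in how you prove that equivalence. The paper works concretely with the explicit adjoint formulas: for the forward direction it compares $\oA(\hP)=\inf\{r\in\eR\mid\hEA_r\geq\hP\}$ with $\inf\{t\in\eR\mid\hE^{\hB}_t\geq\hP\}$ via an inclusion of the index sets, and for the converse it uses $\hEA_r=\bjoin\{\hP\in\PN\mid\oA(\hP)\leq r\}$ and the analogous inclusion of projection sets. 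You instead isolate an abstract duality lemma -- for any two Galois connections $(o_1,E_1)$, $(o_2,E_2)$ between the same pair of posets, $o_1\leq o_2$ pointwise iff $E_1\geq E_2$ pointwise -- and prove it purely from the unit/counit inequalities of Remark \ref{Rem_UnitCounitComposite}, never touching the concrete inf/join formulas. Both arguments are sound and of comparable length; yours is more reusable (the same lemma immediately gives the unproved lemma about $\phi':o\mapsto\Eo$ that follows the proposition, and the later order-isomorphism for $q$-antonymous functions), while the paper's is more self-contained and elementary. You are also more careful on two points the paper passes over silently: the mismatch between the quantifier over $\bbR$ in the definition of $\leq_s$ and the quantifier over $\eR$ in the adjunctions (harmless, as you note, because all extended spectral families agree at $\pm\infty$), and the final transport of conditional completeness along the order-isomorphism, which the paper's proof does not address explicitly even though it is part of the statement.
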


\begin{proof}
Let $\hA,\hB$ be two self-adjoint operators affiliated with a von Neumann algebra $\cN$. Then
\begin{equation}
			\hA\leq_s\hB \quad\Longleftrightarrow\quad \forall r\in\eR: \hEA_r\geq\hE^{\hB}_r,
\end{equation}
which implies, for all $\hP\in\PN$,
\begin{equation}
			\oA(\hP)=\inf\{r\in\eR \mid \hEA_r\geq\hP\} \leq \inf\{t\in\eR \mid \hE^{\hB}_t\geq\hP\}=o^{\hat B}(\hP).
\end{equation}
Conversely, if $\oA\leq o^{\hB}$, then
\begin{align}
			\forall r\in\eR: \hEA_r &= \bjoin\{\hP\in\PN \mid r\geq\oA(\hP)\}\\
			&\geq\bjoin\{\hQ\in\PN \mid r\geq o^{\hB}(\hQ)\}\\
			&=\hE^{\hB}_r,
\end{align}
so $\hA\leq_s\hB$.
\end{proof}
Hence, we can represent the set $(SA(\cN),\leq_s)$ of self-adjoint operators affiliated with the von Neumann algebra $\cN$, equipped with the spectral order, faithfully by the set of real-valued functions in \mbox{$(QO(\PN,\eR),\leq)$}, partially ordered under the pointwise order.

Let $(SF(\eR,\PN),\leq_i)$ be the poset of extended spectral families in $\PN$, equipped with the inverse pointwise order, that is,
\begin{equation}
			\EA\leq_i\hE^{\hB} \quad :\Longleftrightarrow \quad (\forall r\in\eR: \hEA_r\geq\hE^{\hB}_r).
\end{equation}
Then $(SF(\eR,\PN),\leq_i)$ is order-isomorphic to $(SA(\cN),\leq_s)$ by definition of the spectral order, and one easily shows:
\begin{lemma}
The map
\begin{align}
			\phi': (QO(\PN,\eR),\leq) &\lra (SF(\eR,\PN),\leq_i)\\
			o &\lmt \Eo
\end{align}
is an order-isomorphism of conditionally complete lattices.
\end{lemma}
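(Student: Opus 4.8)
The plan is to recognise $\phi'$ as a composite of two order-isomorphisms that are already established, so that the statement follows without any new computation. Let $\psi\colon(SA(\cN),\leq_s)\to(SF(\eR,\PN),\leq_i)$ denote the spectral-theorem bijection $\hA\mapsto\EA$. By the very definition of the spectral order, $\hA\leq_s\hB$ holds precisely when $\hEA_r\geq\hE^{\hB}_r$ for all $r$, i.e. when $\EA\leq_i\hE^{\hB}$; thus $\psi$ is an order-isomorphism. Proposition \ref{Prop_OrderIso} supplies the order-isomorphism $\phi\colon(SA(\cN),\leq_s)\to(QO(\PN,\eR),\leq)$, $\hA\mapsto\oA$.

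First I would check the identity $\phi'=\psi\circ\phi^{-1}$. Given $o\in QO(\PN,\eR)$, set $\hA=\phi^{-1}(o)$, so that $o=\oA=o^{\EA}$. The uniqueness of adjoints in Proposition \ref{Prop_ObsFcts=SpecFams}, expressed by the identity $E^{\oE}=E$, specialises at $E=\EA$ to give $\Eo=E^{o^{\EA}}=\EA=\psi(\hA)$. Hence $\phi'=\psi\circ\phi^{-1}$, a composite of order-isomorphisms, and is therefore itself an order-isomorphism.

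A self-contained alternative is to verify the two order implications directly from the explicit formulas of Theorem \ref{Thm_AdjFunctorThmForPosets}. Using $\Eo(r)=\bjoin\{\hP\in\PN\mid o(\hP)\leq r\}$, the relation $o_1\leq o_2$ makes $\{\hP\mid o_2(\hP)\leq r\}$ a subset of $\{\hP\mid o_1(\hP)\leq r\}$, whence $E^{o_1}(r)\geq E^{o_2}(r)$ for all $r$, i.e. $E^{o_1}\leq_i E^{o_2}$. For the converse one uses $o(\hP)=\inf\{r\in\eR\mid\hP\leq\Eo(r)\}$: if $E^{o_1}(r)\geq E^{o_2}(r)$ for all $r$, the defining set for $o_2(\hP)$ is contained in that for $o_1(\hP)$, forcing $o_1\leq o_2$. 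Together with the bijection of Proposition \ref{Prop_ObsFcts=SpecFams} this again yields the order-isomorphism.

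It remains only to note the lattice statement. Any order-isomorphism of posets preserves all meets and joins that exist; and $(QO(\PN,\eR),\leq)$ is a conditionally complete lattice, being order-isomorphic via $\phi$ to $(SA(\cN),\leq_s)$, which is conditionally complete by the facts recalled in Section \ref{Sec_Prelims}. Thus $\phi'$ transports this structure to $(SF(\eR,\PN),\leq_i)$ and is an isomorphism of conditionally complete lattices. I anticipate no real obstacle: the entire content is bookkeeping, and the only point demanding care is the direction of the order. Passing to right adjoints reverses the pointwise order on functions, and it is exactly the \emph{inverse} pointwise order $\leq_i$ on spectral families that cancels this reversal, making $\phi'$ order-preserving rather than order-reversing.
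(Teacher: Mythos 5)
Your proof is correct and takes essentially the same route as the paper, which observes that $(SF(\eR,\PN),\leq_i)$ is order-isomorphic to $(SA(\cN),\leq_s)$ by the very definition of the spectral order and then composes with the isomorphism of Proposition \ref{Prop_OrderIso}; your factorisation $\phi'=\psi\circ\phi^{-1}$, justified via the adjoint-uniqueness identity $E^{\oE}=E$ from Proposition \ref{Prop_ObsFcts=SpecFams}, is exactly this argument made explicit. The direct verification from the adjoint formulas and the conditional-completeness bookkeeping you add are sound but only spell out what the paper leaves as ``one easily shows.''
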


%
%
%

\subsection{$q$-observable functions and spectra of self-adjoint operators}
Let $\hA$ be a self-adjoint operator affiliated with a von Neumann algebra $\cN$, and let $\EA:\eR\ra\PN$ be its extended spectral family.

\begin{remark}			\label{Rem_inftyInSpec}
Since extended spectral families are defined over the extended reals $\eR$, it makes sense to regard the spectrum of a self-adjoint operator $\hA$ as a subset of $\eR$. We include $-\infty$ in the spectrum of $\hA$ if $\hA$ is unbounded from below, and include $\infty$ in the spectrum if $\hA$ is unbounded from above. Since extended spectral families are right-continuous at $-\infty$ and left-continuous at $\infty$, the `spectral values' $-\infty$ and $\infty$ are never isolated points of the spectrum, i.e., they are never eigenvalues.
\end{remark}

Let $\hA$ be a self-adjoint operator affiliated with a von Neumann algebra $\cN$, and let $\oA$ be the corresponding $q$-observable function. There is a straightforward relation between the spectrum of $\hA$ and the image of $\oA$:
\begin{lemma}			\label{Lem_im(oA)=spA}
Let $\PzN$ denote the non-zero projections in a von Neumann algebra $\cN$, and let $\oA\in QO(\PN,\eR)$ be the $q$-observable function corresponding to a self-adjoint operator $\hA\in SA(\cN)$. Then
\begin{equation}
			\oA(\PzN)=\sp\hA,
\end{equation}
i.e., the image of $\oA$ on the non-zero projections is equal to the spectrum of the operator. If $\hA$ is unbounded from above, then $\infty$ is in $\oA(\PzN)$.
\end{lemma}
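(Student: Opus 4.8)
The plan is to prove the equality by establishing the two inclusions $\oA(\PzN)\subseteq\sp\hA$ and $\sp\hA\subseteq\oA(\PzN)$ separately, working throughout from the explicit formula $\oA(\hP)=\inf\{r\in\eR \mid \hP\leq\hEA_r\}$ and from the characterisation of $\sp\hA$ as the set of $s$ at which $\EA$ is not constant on any neighbourhood (together with the convention of Remark~\ref{Rem_inftyInSpec} that $\infty\in\sp\hA$ precisely when $\hA$ is unbounded from above). The value $-\infty$ needs no attention, since $\oA(\hP)>-\infty$ for every $\hP>\hat 0$ was already noted, which automatically keeps it out of the image on non-zero projections.

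For the inclusion $\oA(\PzN)\subseteq\sp\hA$, I would fix $\hP\neq\hat 0$ and set $s=\oA(\hP)$. Suppose first $s\in\bbR$. The set $S=\{r\in\eR \mid \hP\leq\hEA_r\}$ is an up-set, and right-continuity gives $\hEA_s=\bmeet_{t>s}\hEA_t\geq\hP$, so in fact $s\in S$ with $\hP\leq\hEA_s$, while $\hP\not\leq\hEA_r$ for every $r<s$. If $\EA$ were constant on some interval $(s-\varepsilon,s+\varepsilon)$, then choosing $r_0\in(s-\varepsilon,s)$ and invoking right-continuity at $s$ would force $\hEA_{r_0}=\hEA_s\geq\hP$, contradicting $\hP\not\leq\hEA_{r_0}$; hence $s\in\sp\hA$. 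If instead $s=\infty$, then no finite $r$ satisfies $\hP\leq\hEA_r$, and were $\hA$ bounded from above we would have $\hEA_b=\hat 1\geq\hP$ for some finite $b$, a contradiction; so $\hA$ is unbounded from above and $\infty\in\sp\hA$. I expect this direction to go through cleanly.

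For the reverse inclusion $\sp\hA\subseteq\oA(\PzN)$ I would split according to the nature of the spectral value. If $\hA$ is unbounded from above, then $\oA(\hat 1)=\inf\{r\in\eR \mid \hEA_r=\hat 1\}=\infty$, which gives $\infty\in\oA(\PzN)$ and establishes the final assertion of the lemma. For a finite $s\in\sp\hA$ that is an eigenvalue, i.e.\ $\hEA_s>\hEA_{s^-}$ with $\hEA_{s^-}:=\bjoin_{r<s}\hEA_r$, the jump projection $\hP=\hEA_s-\hEA_{s^-}$ is non-zero, lies below $\hEA_s$, and is orthogonal to every $\hEA_r$ with $r<s$, so $\oA(\hP)=s$. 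For a finite $s$ in the continuous spectrum the natural candidate is $\hP=\hEA_s$ itself (as suggested by the computation $\oA(\hEA_{r_0})=r_0$ preceding the lemma), and one is reduced to verifying $\oA(\hEA_s)=s$.

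The main obstacle is precisely this last step. A direct computation shows that $\oA(\hEA_s)=s$ holds exactly when $\hEA_s\neq\hat 0$ and $\hEA_r<\hEA_s$ strictly for every $r<s$, i.e.\ when the spectral family genuinely increases immediately to the left of $s$. This is automatic at eigenvalues and at two-sided accumulation points of the spectrum, but it is delicate exactly when $s$ is the lower edge of a spectral band that is shadowed from below, either by being the global bottom of the spectrum (so $\hEA_s=\hat 0$) or by sitting above a spectral gap (so $\hEA_r=\hEA_s$ for some $r<s$); in such cases the candidate $\hEA_s$ yields $\oA(\hEA_s)<s$ or a zero projection. To recover these boundary values I would try families of spectral projections of the form $\hEA_t-\hEA_s$ with $t\downarrow s$, or refinements thereof, and use join-preservation of $\oA$ to pin down $s$. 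The heart of the argument — and where I expect the real difficulty to lie — is thus showing that \emph{every} point of $\sp\hA$ is genuinely attained rather than merely approximated, so that no spectral value is lost at such shadowed band-edges.
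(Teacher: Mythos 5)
Your inclusion $\oA(\PzN)\subseteq\sp\hA$ is correct, and so are your treatments of eigenvalues and of the value $\infty$. But the ``main obstacle'' you flag at the end is not a gap you can close: it is a point at which the stated lemma is actually false, so no choice of projections will recover those spectral values. Concretely, let $\hA$ be multiplication by $x$ on $\cH=L^2([0,1])$ and $\cN=\BH$, so $\sp\hA=[0,1]$ and $0$ is not an eigenvalue; then $\hEA_0=\hat 0$. Since the infimum defining $\oA(\hP)$ is attained (Lemma \ref{Lem_infIsmin}), $\oA(\hP)=0$ would force $\hP\leq\hEA_0=\hat 0$, so $0\in\sp\hA$ but $0\notin\oA(\PzN)$; in fact $\oA(\PzN)=(0,1]$ here (which, incidentally, also contradicts the compactness claim of Cor.~\ref{Cor_BoundedCompactImage}). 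The same argument kills every shadowed band edge: if $\hEA_{s_0}=\hEA_s$ for some $s_0<s$, then any $\hP\leq\hEA_s$ satisfies $\oA(\hP)\leq s_0<s$, so $s$ is never attained. Your proposed rescue via the projections $\hEA_t-\hEA_s$ with $t\downarrow s$ cannot succeed either: $\oA$ preserves joins (suprema), not meets, and those projections only produce image values in $(s,t]$ accumulating at $s$ from above. What is true --- and what your argument essentially proves once this right-approximation is added --- is de Groote's statement \cite{deG05} that $\oA(\PzN)$ is dense in $\sp\hA$, i.e. $\overline{\oA(\PzN)}=\sp\hA$; the exact image is $\{r\in\eR \mid \forall s<r:\hEA_s\lneq\hEA_r\}$, which omits precisely those spectral values that are isolated from below in $\sp\hA$ and are not eigenvalues.

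You should also know that the paper's own proof makes exactly the inference you refused to make: it asserts $\oA(\hEA_r)=r$ for \emph{every} $r\in\sp\hA$, ``by right-continuity of $\EA$'' and the characterisation of $\sp\hA$ via non-constancy of $\EA$ on neighbourhoods of $r$. That inference is invalid: non-constancy on every neighbourhood of $r$ allows $\EA$ to be constant on a left-neighbourhood of $r$ and increase only to the right, in which case $\oA(\hEA_r)<r$ (indeed $\oA(\hEA_r)=-\infty$ when $\hEA_r=\hat 0$, as in the example above). So your diagnosis located a genuine error in the published proof, not a deficiency of your own approach: the lemma needs either the weaker conclusion (density, equivalently equality of closures) or an extra hypothesis --- e.g. that every point of $\sp\hA$ isolated from below, including $\min\sp\hA$ when $\hA$ is bounded below, is an eigenvalue --- under which your jump-projection construction completes the proof.
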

\begin{proof}
Let $r\in\sp\hA$, then
\begin{align}
			\oA(\hEA_r)=\inf\{s\in\eR \mid \hEA_s\geq\hEA_r\},
\end{align}
which equals $r$ by right-continuity of $\EA$ and the fact that $\sp\hA$ consists of those real numbers $r$ for which $\EA$ is not constant on any open neighbourhood of $r$ (cf. paragraph after Def. \ref{Def_SpecOfSelfAdjOp}). Conversely, if $r$ is in the image of $\oA$, then $\hEA_t<\hEA_r$ for all $t<r$, so $\EA$ is not constant on any neighbourhood of $r$, hence $r\in\sp A$.

If $\hA$ is unbounded from above, then $\hEA_s<\hat 1$ for all $s\in\bbR$ and $\hEA_{\infty}=\hat 1$, so $\oA(\hat 1)=\infty$.
\end{proof}
In Remark 2.13 in \cite{deG05}, de Groote had shown (for bounded operators) that the image of $\oA$ on non-zero projections is dense in the spectrum of $\hA$. We now see that in fact these two sets are equal.

Of course, if one considers all projections including $\hat 0$, then
\begin{equation}
			\oA(\PN)=\sp\hA\cup\{-\infty\}.
\end{equation}

\begin{lemma}			\label{Lem_infIsmin}
Let $\hA$ be a self-adjoint operator affiliated with $\cN$, and let $\oA:\PN\ra\eR$ be its $q$-observable function. Then
\begin{equation}
			\forall \hP\in\PzN: \oA(\hP)=\inf\{r\in\eR \mid \hP\leq\hEA_r\}=\min\{r\in\eR \mid \hP\leq\hEA_r\}.
\end{equation}
\end{lemma}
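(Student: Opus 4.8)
The first equality is simply the definition of the $q$-observable function, so the entire content of the statement is that the infimum defining $\oA(\hP)$ is attained. The plan is to show that the set
\[
S := \{r\in\eR \mid \hP\leq\hEA_r\}
\]
contains its own infimum. I would first record two structural facts about $S$. By monotonicity of the spectral family, $S$ is upward closed: if $r\in S$ and $s\geq r$, then $\hEA_s\geq\hEA_r\geq\hP$, so $s\in S$. And $S$ is never empty, since $\hEA_\infty=\hat 1\geq\hP$ forces $\infty\in S$; on the other hand, for a non-zero projection $\hP$ we have $\hEA_{-\infty}=\hat 0$, and $\hP\leq\hat 0$ would give $\hP=\hat 0$, so $-\infty\notin S$. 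Hence $r_0:=\inf S$ satisfies $-\infty<r_0\leq\infty$. This is precisely where the hypothesis $\hP\in\PzN$ is used.

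The decisive step is to promote this infimum to a minimum using right-continuity of $\EA$. I would split into two cases. If $r_0=\infty$, then $S=\{\infty\}$ and there is nothing further to prove. If $r_0\in\bbR$, then for every $s>r_0$ the number $s$ fails to be a lower bound of $S$, so there is some $r\in S$ with $r<s$; since $S$ is upward closed this gives $s\in S$, i.e. $\hEA_s\geq\hP$. Thus $\hP$ is a lower bound for the family $(\hEA_s)_{s>r_0}$, and therefore
\[
\hP\leq\bmeet_{s>r_0}\hEA_s=\hEA_{r_0},
\]
where the last equality is exactly right-continuity of the extended spectral family at $r_0$. Hence $r_0\in S$, so $\inf S=\min S$, which is what we wanted.

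The only genuine obstacle is this attainment argument, and it hinges entirely on right-continuity: without condition (iii) in the definition of a spectral family, the meet $\bmeet_{s>r_0}\hEA_s$ could strictly exceed $\hEA_{r_0}$, and the infimum would not in general be realised. Everything else — upward closedness of $S$, its non-emptiness, and the exclusion of $-\infty$ for non-zero projections — is routine and uses only monotonicity together with the boundary values $\hEA_{-\infty}=\hat 0$ and $\hEA_{\infty}=\hat 1$.
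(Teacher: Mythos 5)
Your overall strategy --- attainment of the infimum via right-continuity --- is exactly the substance of the paper's proof, which simply invokes meet-preservation of $\EA:\eR\ra\PN$ (established in an earlier lemma of the paper). But there is one genuine logical slip: from ``$-\infty\notin S$'' you conclude ``$r_0:=\inf S>-\infty$'', and that inference is invalid. A subset of $\eR$ can fail to contain $-\infty$ and still have infimum $-\infty$; concretely, nothing you have established at that point rules out $S\supseteq\bbR$, i.e.\ $\hP\leq\hEA_r$ for \emph{every} real $r$, in which case $\inf S=-\infty$ even though $-\infty\notin S$. Excluding this scenario requires the limit condition (ii) in the definition of a spectral family, $\bmeet_{r\in\bbR}\hEA_r=\hat 0$ --- equivalently, meet-preservation of the \emph{extended} family at $-\infty$ --- and not, as your closing paragraph asserts, only monotonicity together with the boundary value $\hEA_{-\infty}=\hat 0$. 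The boundary value and the limit condition are separate axioms, and your argument appeals to the wrong one.

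The repair is short and stays within your framework: if $\inf S=-\infty$, then upward closedness of $S$ gives $S\supseteq\bbR$, so $\hP\leq\bmeet_{r\in\bbR}\hEA_r=\hat 0$, contradicting $\hP\in\PzN$. Alternatively, drop the case analysis altogether and argue uniformly, which is precisely what the paper's one-line proof does: since the extended right-continuous spectral family $\EA:\eR\ra\PN$ preserves \emph{all} meets, one has $\hEA_{\inf S}=\bmeet_{r\in S}\hEA_r\geq\hP$ (each term of the meet dominates $\hP$ by definition of $S$), so $\inf S\in S$ and the infimum is a minimum --- with no separate handling of real points versus the endpoints $\pm\infty$.
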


\begin{proof}
This follows directly from meet-preservation of $\EA:\eR\ra\PN$.
\end{proof}

Since a bounded self-adjoint operator has a compact spectrum, we have:
\begin{corollary}			\label{Cor_BoundedCompactImage}
If $\hA$ is a bounded self-adjoint operator in $\cN$, then the image $\oA(\PzN)$ of the corresponding $q$-observable function $\oA$ is compact.
\end{corollary}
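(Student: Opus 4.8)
The plan is to reduce the statement almost entirely to Lemma~\ref{Lem_im(oA)=spA}, together with the standard fact recorded just after Definition~\ref{Def_SpecOfSelfAdjOp} that the spectrum of a bounded self-adjoint operator is a compact subset of $\bbR$. First I would invoke Lemma~\ref{Lem_im(oA)=spA} to obtain the identity $\oA(\PzN)=\sp\hA$. The one point that deserves a moment's care is that this identity really places $\oA(\PzN)$ inside $\bbR$, not merely inside $\eR$. Since $\hA$ is bounded it is in particular bounded from above, so the second clause of Lemma~\ref{Lem_im(oA)=spA} (which would force $\infty$ into the image) does not apply, and hence $\infty\notin\oA(\PzN)$. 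Likewise we already know that $\oA(\hP)>-\infty$ for every non-zero projection $\hP$, so $-\infty\notin\oA(\PzN)$. Thus $\oA(\PzN)$ is genuinely a subset of $\bbR$ and coincides with $\sp\hA$.

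With this identification in hand the conclusion is immediate: boundedness of $\hA$ forces $\sp\hA$ to be compact, and therefore $\oA(\PzN)=\sp\hA$ is compact as well.

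I do not expect any real obstacle here, since the substantive work has already been carried out in Lemma~\ref{Lem_im(oA)=spA}. The corollary is in essence the single observation that compactness of the spectrum transfers, via the set equality, to the image of the $q$-observable function on non-zero projections; the only genuinely new check is the mild bookkeeping that confirms neither $-\infty$ nor $\infty$ enters the image in the bounded case.
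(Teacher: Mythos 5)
Your proposal is correct and follows essentially the same route as the paper: the corollary is stated there as an immediate consequence of Lemma~\ref{Lem_im(oA)=spA} together with the fact that a bounded self-adjoint operator has compact spectrum in $\bbR$. Your additional check that neither $-\infty$ nor $\infty$ lies in $\oA(\PzN)$ in the bounded case is a harmless (and sensible) piece of bookkeeping that the paper leaves implicit.
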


\begin{proposition}
Let $\cN$ be a von Neumann algebra. There is a bijection between $\cN_{\sa}$, the set of self-adjoint operators in $\cN$, and $QO^c(\PN,\eR)$, the $q$-observable functions with compact image on $\mc P_0(\cN)$.
\end{proposition}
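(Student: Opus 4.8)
The plan is to obtain this bijection simply by restricting the bijection $\phi:\hA\mt\oA$ between $SA(\cN)$ and $QO(\PN,\eR)$ established in Theorem \ref{Thm_SA(N)=QO} (this is the same map as in Proposition \ref{Prop_OrderIso}). Since $\phi$ is already known to be a bijection, it suffices to prove the set equality $\phi(\cN_{\sa})=QO^c(\PN,\eR)$; the restriction $\phi|_{\cN_{\sa}}$ is then automatically a bijection onto $QO^c(\PN,\eR)$. Throughout I would read ``compact image'' as ``compact subset of $\bbR$'', so that $\pm\infty$ is excluded. This is the interpretation that makes the statement correct: a spectrum regarded inside $\eR$ may contain $\infty$ yet still be compact in $\eR$ (e.g.\ $[a,\infty]$), whereas the corresponding operator is unbounded, so compactness in $\eR$ alone would not characterise $\cN_{\sa}$.

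The first inclusion $\phi(\cN_{\sa})\subseteq QO^c(\PN,\eR)$ is immediate from what we already have: if $\hA\in\cN_{\sa}$ then $\hA$ is a bounded self-adjoint operator in $\cN$, and Corollary \ref{Cor_BoundedCompactImage} gives at once that $\oA(\PzN)$ is compact, so $\oA\in QO^c(\PN,\eR)$. For the reverse inclusion $QO^c(\PN,\eR)\subseteq\phi(\cN_{\sa})$, I would take $o\in QO^c(\PN,\eR)$, use Theorem \ref{Thm_SA(N)=QO} to write $o=\oA$ for a unique $\hA\in SA(\cN)$, and invoke Lemma \ref{Lem_im(oA)=spA} to identify $\oA(\PzN)=\sp\hA$. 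Since $\oA(\PzN)$ is by hypothesis a compact subset of $\bbR$, the spectrum $\sp\hA$ is bounded in $\bbR$; hence $\hA$ is bounded from below and from above, i.e.\ bounded as an operator. As $\hA$ is affiliated with $\cN$ and bounded, it lies in $\cN_{\sa}$ (cf.\ the remark following the spectral theorem). Thus $o=\phi(\hA)$ with $\hA\in\cN_{\sa}$. Combining the two inclusions gives $\phi(\cN_{\sa})=QO^c(\PN,\eR)$ and therefore the desired bijection.

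The only genuinely delicate point is the one flagged above: one must be sure that unboundedness of $\hA$ is actually detected by non-compactness of the image \emph{in $\bbR$}, so that no unbounded operator slips into $QO^c(\PN,\eR)$. This is handled by Lemma \ref{Lem_im(oA)=spA} together with Remark \ref{Rem_inftyInSpec}: if $\hA$ is unbounded from above then $\oA(\hat 1)=\infty$, so $\infty\in\oA(\PzN)$, and symmetrically the image is unbounded below (it contains arbitrarily negative spectral values, since $\oA(\hEA_{r_0})=r_0$ for $r_0\in\sp\hA$) when $\hA$ is unbounded from below. In either case $\oA(\PzN)$ fails to be a compact subset of $\bbR$, which is exactly what is needed for the backward inclusion to be watertight.
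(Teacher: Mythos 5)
Your proof is correct and takes essentially the same route as the paper's, whose entire argument is a one-line appeal to Theorem \ref{Thm_SA(N)=QO}, the boundedness of every operator in $\cN_{\sa}$, and Corollary \ref{Cor_BoundedCompactImage}. Your extra care with the backward inclusion---using Lemma \ref{Lem_im(oA)=spA} to see that a compact image in $\bbR$ forces a bounded spectrum and hence a bounded operator, and insisting that ``compact'' be read as ``compact subset of $\bbR$'' rather than of $\eR$---simply makes explicit what the paper's ``follows immediately'' leaves implicit.
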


\begin{proof}
This follows immediately from Thm. \ref{Thm_SA(N)=QO}, the fact that all self-adjoint operators in $\cN$ are bounded, and Cor. \ref{Cor_BoundedCompactImage}. 
\end{proof}

\section{Further properties of $q$-observable functions}			\label{Sec_FurtherProperties}
\subsection{Restricting the codomain} Let $\hA$ be a (bounded) self-adjoint operator in a von Neumann algebra $\cN$. Then the compact set $\sp\hA$ is a complete sublattice of $\eR$, that is, arbitrary infima and suprema exist in $\sp\hA$ and coincide with those in $\eR$. Hence, we can describe $\hA$ equivalently by a join-preserving function
\begin{align}
			\oA:\PN &\lra \sp\hA\cup\{-\infty\}\\			\nonumber
			\hP &\lmt \inf\{r\in\sp\hA\cup\{-\infty\} \mid \hEA_r\geq\hP\},
\end{align}
given by restricting the codomain of $\oA:\PN\ra\eR$. Note that $\oA(\PzN)=\sp\hA$; only the zero projection is mapped to $-\infty$. More generally, if $C$ is any compact subset of $\bbR$ that contains $\sp\hA$, then $C\cup\{-\infty\}$ can serve as the codomain of $\oA$. Note that if we consider all self-adjoint operators in $\cN$ (which by definition are bounded), then $\bbR\cup\{-\infty\}\subset\eR$ is the smallest common codomain of the associated $q$-observable functions.

\subsection{Restricting the domain -- outer daseinisation of self-adjoint operators}			\label{Subsec_Restrict}
Let $\hA$ be self-adjoint and affiliated with $\cN$, and let $\cM\subset\cN$ be a von Neumann subalgebra such that the unit elements in $\cN$ and $\cM$ coincide. In general, $\hA$ is not affiliated with $\cM$. In this subsection, we will consider an approximation of $\hA$ by a (generalised) self-adjoint operator $\doto{\hA}{\cM}$ in $\cM$, where `generalised' means that $\doto{\hA}{\cM}$ may have $\infty$ as an isolated point (i.e., an `eigenvalue') of its spectrum. We will show that for a self-adjoint operator $\hA$ that is bounded from above, $\doto{\hA}{\cM}$ is always a proper self-adjoint operator affiliated with $\cM$.

The (generalised) operator $\doto{\hA}{\cM}$ is called the \emph{outer daseinisation of $\hA$ to $\cM$} and plays an important role in the topos approach to quantum theory. Our main focus is on the $q$-observable function $o^{\doto{\hA}{\cM}}$ of $\doto{\hA}{\cM}$, and how it arises from Galois connections.

Recall from Lemma \ref{Lem_InclusionIsCOMLMorphism} that the inclusion $i:\PM\hra\PN$ is a morphism of complete orthomodular lattices that has a left adjoint $\deo_{\cM}:\PN\ra\PM$ and a right adjoint $\dei_{\cM}:\PN\ra\PM$. Let $\hA$ be a self-adjoint operator affiliated with $\cN$, and let $\EA:\eR\ra\PN$ be the extended right-continuous spectral family of $\hA$. We define a new map $E^{\doto{\hA}{\cM}}:\eR\ra\PM$ by
\begin{align}
			E^{\doto{\hA}{\cM}}:=\dei_{\cM}\circ\EA.
\end{align}
The idea is to approximate $\EA$ in $\PM$ by taking, for each $r\in\eR$, the largest projection in $\cM$ that is dominated by $\hEA_r$. It is easy to see that $E^{\doto{\hA}{\cM}}$ is a monotone, right-continuous map from $\eR$ to $\PM$ with $\bmeet_{r\in\bbR}\hE^{\doto{\hA}{\cM}}_r=\hat 0$ and $\hE^{\doto{\hA}{\cM}}_{\infty}=\hat 1$. Yet, in general, we need not have \begin{align}			\label{Eq_LeftContinAtInfty}
			\bjoin_{r\in\bbR}\hE^{\doto{\hA}{\cM}}_r=\hat 1,
\end{align}
since $\dei_{\cM}(\hEA_r)\leq\hEA_r$ for each $r\in\bbR$. We call $\hE^{\doto{\hA}{\cM}}$ a \emph{weak right-continuous} extended spectral family. It is a (proper) extended spectral family if and only if eq. \eq{Eq_LeftContinAtInfty} holds.  It is straightforward to show that the left adjoint $o^{\doto{\hA}{\cM}}$ of $E^{\doto{\hA}{\cM}}$ is a weak $q$-observable function (cf. Def. \ref{Def_AbsObsFct}). In any case, we can form
\begin{equation}
			\doto{\hA}{\cM}:=\int_{-\infty}^{\infty} r\;d(E^{\doto{\hA}{\cM}})=\int_{-\infty}^{\infty} r\;d(\dito{\hEA_r}{\cM}),
\end{equation}
which is called the \emph{outer daseinisation of $\hA$ to $\cM$}. $\doto{\hA}{\cM}$ is a self-adjoint operator affiliated with $\cM$ if and only if $\hE^{\doto{\hA}{\cM}}=\dei_{\cM}\circ\EA$ is a proper extended spectral family, that is, if eq. \eq{Eq_LeftContinAtInfty} holds. In this case, by construction we have
\begin{align}			\label{Eq_dotoAsMeet}
			\doto{\hA}{\cM}=\bmeet\{\hB\in SA(\cM) \mid \hB\geq_s\hA\},
\end{align}
where the meet is taken with respect to the spectral order on $SA(\cM)$. That is, $\doto{\hA}{\cM}$ is the smallest self-adjoint operator in $SA(\cM)$ that dominates $\hA$ with respect to the spectral order. If eq. \eq{Eq_LeftContinAtInfty} does not hold, then the meet in \eq{Eq_dotoAsMeet} is not defined in the conditionally complete lattice $SA(\cM)$.\footnote{One could consider a lattice completion of $SA(\cM)$, but this would lead us too far from the goals of this paper.} In this case, $\doto{\hA}{\cM}$ can be understood as a `self-adjoint operator with eigenvalue $\infty$', where $\hat 1-\bjoin_{r<\infty}\dito{\hEA_r}{\cM}$ is the projection onto the `eigenspace' of $\infty$.

\begin{lemma}			\label{Lem_BoundedFromAboveAndBounded}
If $\hA\in SA(\cN)$ is bounded from above, then $\hE^{\doto{\hA}{\cM}}$ is an extended right-continuous spectral family, and hence $\doto{\hA}{\cM}$ is a (proper) self-adjoint operator affiliated with $\cM$. If $\hA$ is bounded, then $\doto{\hA}{\cM}$ is bounded, too.
\end{lemma}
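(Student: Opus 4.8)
The plan is to observe that every defining property of an extended right-continuous spectral family except eq. \eq{Eq_LeftContinAtInfty} has already been recorded in the discussion preceding the lemma: the map $\hE^{\doto{\hA}{\cM}}=\dei_\cM\circ\EA$ is monotone and right-continuous, satisfies $\bmeet_{r\in\bbR}\hE^{\doto{\hA}{\cM}}_r=\hat 0$, and has $\hE^{\doto{\hA}{\cM}}_{\infty}=\hat 1$. Hence, under the hypothesis that $\hA$ is bounded from above, the only thing left to verify is the left-continuity-at-infinity condition $\bjoin_{r\in\bbR}\hE^{\doto{\hA}{\cM}}_r=\hat 1$. Once this is in hand, the equivalence already stated above (that $\doto{\hA}{\cM}$ is affiliated with $\cM$ precisely when $\hE^{\doto{\hA}{\cM}}$ is a proper extended spectral family, i.e. when eq. \eq{Eq_LeftContinAtInfty} holds) immediately yields that $\doto{\hA}{\cM}$ is a genuine self-adjoint operator affiliated with $\cM$.

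The key step is the single fact that $\dei_\cM$ sends $\hat 1$ to $\hat 1$. This holds because $\cM$ and $\cN$ share the same unit, so $\hat 1\in\PM$ and therefore $\dei_\cM(\hat 1)=\bjoin\{\hQ\in\PM \mid \hQ\leq\hat 1\}=\hat 1$; equivalently, $\dei_\cM$ is a right adjoint (Lemma \ref{Lem_InclusionIsCOMLMorphism}) and so preserves all meets, including the empty meet, which is the top element. Now $\hA$ bounded from above means its spectral family is bounded from above, so there is some $b\in\bbR$ with $\hEA_r=\hat 1$ for all $r\geq b$. Applying $\dei_\cM$ gives $\hE^{\doto{\hA}{\cM}}_r=\dei_\cM(\hEA_r)=\dei_\cM(\hat 1)=\hat 1$ for every $r\geq b$, whence $\bjoin_{r\in\bbR}\hE^{\doto{\hA}{\cM}}_r=\hat 1$. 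This is exactly eq. \eq{Eq_LeftContinAtInfty}, so $\hE^{\doto{\hA}{\cM}}$ is an extended right-continuous spectral family and $\doto{\hA}{\cM}$ is affiliated with $\cM$.

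For the second assertion I would run the dual argument at the bottom end. If $\hA$ is bounded it is in particular bounded from below, so there is $a\in\bbR$ with $\hEA_r=\hat 0$ for all $r<a$; since $\hat 0\in\PM$ we have $\dei_\cM(\hat 0)=\hat 0$, and hence $\hE^{\doto{\hA}{\cM}}_r=\hat 0$ for all $r<a$. Together with the bound $\hE^{\doto{\hA}{\cM}}_r=\hat 1$ for $r\geq b$ obtained above, this shows $\hE^{\doto{\hA}{\cM}}$ is a bounded spectral family, so $\doto{\hA}{\cM}$ is a bounded self-adjoint operator.

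The whole argument is short, and the only point carrying any content is the preservation of the top element by $\dei_\cM$; I expect this to be the crux rather than a genuine obstacle, since it rests essentially on the shared-unit hypothesis. It is worth flagging that this is precisely where boundedness from above is needed: without it one has $\hEA_r<\hat 1$ for all finite $r$, and the projections $\dei_\cM(\hEA_r)$ may fail to join up to $\hat 1$, which is exactly the phenomenon captured by the term \emph{weak} in ``weak right-continuous extended spectral family.''
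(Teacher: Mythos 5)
Your proof is correct and takes essentially the same approach as the paper's: boundedness from above gives $\hEA_{r_0}=\hat 1$ for some $r_0\in\bbR$, applying $\dei_\cM$ (which fixes $\hat 1$ because $\cM$ and $\cN$ share the unit) yields eq.~\eq{Eq_LeftContinAtInfty}, and the dual argument at $\hat 0$ handles the bounded case. The only difference is that you spell out explicitly why $\dei_\cM(\hat 1)=\hat 1$ and $\dei_\cM(\hat 0)=\hat 0$, steps the paper leaves implicit.
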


\begin{proof}
If $\hA$ is bounded from above, there is an $r_0\in\bbR$ such that $\hEA_{r_0}=\hat 1$, so $\hE^{\doto{\hA}{\cM}}_{r_0}=\hat 1$ and $\bjoin_{r\in\bbR}\hE^{\doto{\hA}{\cM}}_r=\hat 1$. Note that $r_0$ is an upper bound for $\sp(\doto{\hA}{\cM})$. If $\hA$ is also bounded from below, then there exists some $s_0\in\bbR$ such that $\hEA_{s_0}=\hat 0$, so $\hE^{\doto{\hA}{\cM}}_{s_0}=\hat 0$, and $s_0$ is a lower bound for $\sp(\doto{\hA}{\cM})$.
\end{proof}

We remark that $\hA$ being bounded from above is a sufficient, but not a necessary condition for $E^{\doto{\hA}{\cM}}$ to be an extended spectral family.

Now observe that $E^{\doto{\hA}{\cM}}=\dei_{\cM}\circ\EA:\eR\ra\PM$ is the composite of two right adjoints, so it is a right adjoint itself (cf. Rem. \ref{Rem_UnitCounitComposite}). The corresponding left adjoint is
\begin{align}
			o^{\doto{\hA}{\cM}}=\oA_{\cN}\circ i:\PM\lra\eR,
\end{align}
This is a weak $q$-observable function, and if $\hE^{\doto{\hA}{\cM}}$ is an extended right-continuous spectral family, then $o^{\doto{\hA}{\cM}}$ is a $q$-observable function. Since $i:\PM\ra\PN$ is an inclusion, we have $\oA_{\cN}\circ i=\oA_{\cN}|_{\PM}$. Summing up, we obtain:

\begin{proposition}			\label{Prop_OuterDasAndqObsFcts}
Let $\cN$ be a von Neumann algebra, let $\hA\in SA(\cM)$, and let $\cM\subset\cN$ be a von Neumann subalgebra such that the unit elements in $\cM$ and $\cN$ coincide. The weak $q$-observable function of $\doto{\hA}{\cM}$, the outer daseinisation of $\hA$ to $\cM$, is given by $o^{\doto{\hA}{\cM}}=\oA_{\cN}\circ i=\oA|_{\PM}$. If $\hA$ is bounded from above, then $\doto{\hA}{\cM}=\bmeet\{\hB\in SA(\cM) \mid \hB\geq_s\hA\}\in SA(\cM)$, and $o^{\doto{\hA}{\cM}}$ is a (proper) $q$-observable function.
\end{proposition}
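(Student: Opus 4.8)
The plan is to observe that essentially all the work has already been carried out in the paragraphs preceding the statement, so that the proof reduces to assembling existing pieces via the composition law for Galois connections. The central identity to pin down is $o^{\doto{\hA}{\cM}}=\oA_{\cN}\circ i$. First I would recall the two Galois connections already at hand: by Lemma \ref{Lem_InclusionIsCOMLMorphism} the inclusion $i:\PM\hra\PN$ has right adjoint $\dei_{\cM}$, giving the Galois connection $(i,\dei_{\cM})$ between $\PM$ and $\PN$; and by the adjoint functor theorem (Thm. \ref{Thm_AdjFunctorThmForPosets}) the extended spectral family $\EA:\eR\ra\PN$ is the right adjoint of the $q$-observable function $\oA_{\cN}:\PN\ra\eR$, giving the Galois connection $(\oA_{\cN},\EA)$ between $\PN$ and $\eR$.

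Next I would invoke the composition rule from Remark \ref{Rem_UnitCounitComposite}: composing the two Galois connections $(i,\dei_{\cM})$ and $(\oA_{\cN},\EA)$ yields a Galois connection between $\PM$ and $\eR$ whose right adjoint is $\dei_{\cM}\circ\EA=E^{\doto{\hA}{\cM}}$ and whose left adjoint is therefore $\oA_{\cN}\circ i$. By uniqueness of adjoints this identifies $o^{\doto{\hA}{\cM}}=\oA_{\cN}\circ i$, and since $i$ is the inclusion this is exactly the restriction $\oA_{\cN}|_{\PM}$. That $o^{\doto{\hA}{\cM}}$ is a weak $q$-observable function then follows immediately: join-preservation is automatic for a left adjoint, and condition (a) of Def. \ref{Def_AbsObsFct} holds because $\oA_{\cN}(\hP)>-\infty$ already for every non-zero $\hP\in\PN$, in particular for those lying in $\PM$.

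For the second assertion I would feed in the boundedness hypothesis. If $\hA$ is bounded from above, Lemma \ref{Lem_BoundedFromAboveAndBounded} tells us that $\hE^{\doto{\hA}{\cM}}$ is a genuine (proper) extended right-continuous spectral family --- in particular eq. \eq{Eq_LeftContinAtInfty} holds --- so by the spectral theorem $\doto{\hA}{\cM}$ is a bona fide self-adjoint operator affiliated with $\cM$. The meet representation $\doto{\hA}{\cM}=\bmeet\{\hB\in SA(\cM)\mid\hB\geq_s\hA\}$ was already established in eq. \eq{Eq_dotoAsMeet} under exactly this hypothesis, so I would simply cite it. Finally, because $E^{\doto{\hA}{\cM}}$ is now a proper extended spectral family, Prop. \ref{Prop_ObsFcts=SpecFams} upgrades its left adjoint $o^{\doto{\hA}{\cM}}$ from a merely weak $q$-observable function to an honest (abstract) $q$-observable function, completing the argument.

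The only genuinely delicate point --- and the one I would be most careful about --- is the bookkeeping in the composition step: one must correctly track that the left adjoint of a composite of right adjoints is the composite of the respective left adjoints taken in the \emph{reverse} order, so that it is $\oA_{\cN}\circ i$ (and not $i\circ\oA_{\cN}$, which does not even typecheck) that appears. Beyond this and the routine verification of condition (a), there is no real obstacle, since the case distinction on boundedness from above is handled entirely by the already-proved Lemma \ref{Lem_BoundedFromAboveAndBounded}.
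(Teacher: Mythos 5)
Your proposal is correct and follows essentially the same route as the paper: the paper's argument (given in the text preceding the proposition rather than in a formal proof environment) likewise composes the Galois connections $(i,\dei_{\cM})$ and $(\oA_{\cN},\EA)$ via Remark \ref{Rem_UnitCounitComposite} to identify $o^{\doto{\hA}{\cM}}=\oA_{\cN}\circ i=\oA|_{\PM}$ by uniqueness of adjoints, and then invokes Lemma \ref{Lem_BoundedFromAboveAndBounded} together with eq.~\eq{Eq_dotoAsMeet} for the bounded-from-above case. Your added care about the reversal of order when composing adjoints, and the explicit verification of condition (a) of Def.~\ref{Def_AbsObsFct}, are both consistent with (and slightly more detailed than) the paper's presentation.
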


This means that the weak $q$-observable function $o^{\doto{\hA}{\cM}}$ of the outer daseinisation of $\oA$ to \emph{any} von Neumann subalgebra $\cM$ is simply given by restriction of $\oA$ to the smaller domain $\PM$. Hence, $\oA$ encodes all outer daseinisations to subalgebras in a uniform way.

\subsection{Inner daseinisation}			\label{Subsec_InnerDasAndzAFcts}
We will show that there is also an \emph{inner daseinisation $\dito{\hA}{\cM}$} of a self-adjoint operator $\hA\in SA(\cN)$ to a von Neumann subalgebra $\cM\subset\cN$, and that it arises from the restriction of some function from $\PN$ to $\eR$, too. 

Let $\hA\in SA(\cN)$, and let $\FA$ be its \emph{left}-continuous extended spectral family. Note that $\FA:\eR\ra\PN$ preserves all joins, so it has a right adjoint $\zA:\PN\ra\eR$. Since $\zA$ determines $\FA$ uniquely, it also determines $\hA$ uniquely by the spectral theorem (which holds for left-continuous spectral families, too). The composite
\begin{align}
			F^{\dito{\hA}{\cM}}:=\deo_{\cM}\circ\FA:\eR\lra\PM
\end{align}
preserves joins, and hence is left-continuous, because both $\FA$ and $\deo_{\cM}$ preserve joins.\footnote{In contrast, $\deo_{\cM}\circ\EA$ is not right-continuous -- this is the reason for using the left-continuous spectral family $\FA$ instead of the right-continuous one, $\EA$.} The idea is to approximate $\FA$ in $\PM$ by taking, for each $r\in\eR$, the smallest projection in $\PM$ that dominates $\hFA_r$. It is easy to see that $\hF^{\dito{\hA}{\cM}}_{-\infty}=\hat 0$ and $\bjoin_{r\in\bbR}\hF^{\dito{\hA}{\cM}}_r=\hat 1$. Yet, in general it need not hold that
\begin{align}			\label{Eq_RightContinAtNinusInfty}
			\bmeet_{r\in\bbR}\hF^{\dito{\hA}{\cM}}_r=\hat 0,
\end{align}
since $\deo_{\cM}(\hFA_r)\geq\hFA_r$ for all $r\in\bbR$. We call $F^{\dito{\hA}{\cM}}$ a \emph{weak left-continuous} extended spectral family. It is a (proper) left-continuous extended spectral family if eq. \eq{Eq_RightContinAtNinusInfty} holds. We define the \emph{inner daseinisation of $\hA$ to $\cM$} by
\begin{align}
			\dito{\hA}{\cM}:=\int_{-\infty}^\infty r\;d(F^{\dito{\hA}{\cM}})=\int_{-\infty}^\infty r\;d(\doto{\hFA_r}{\cM}).
\end{align}
This is a self-adjoint operator affiliated with $\cM$ if and only if $\deo_{\cM}\circ\FA$ is a (proper) left-continuous extended spectral family, i.e., if eq. \eq{Eq_RightContinAtNinusInfty} holds. Otherwise, $\dito{\hA}{\cM}$ can be thought of as a `self-adjoint operator with eigenvalue $-\infty$', where $\bmeet_{r\in\bbR}\hF^{\dito{\hA}{\cM}}_r$ is the projection onto the `eigenspace' of $-\infty$. In analogy to Lemma \ref{Lem_BoundedFromAboveAndBounded}, one can show:

\begin{lemma}			\label{Lem_BoundedFromBelowAndBounded}
If $\hA\in SA(\cN)$ is bounded from below, then $F^{\dito{\hA}{\cM}}$ is an extended left-continuous spectral family, and hence $\dito{\hA}{\cM}$ is a (proper) self-adjoint operator affiliated with $\cM$. If $\hA$ is bounded, then $\dito{\hA}{\cM}$ is bounded, too.
\end{lemma}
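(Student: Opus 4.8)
The plan is to run the dual of the proof of Lemma \ref{Lem_BoundedFromAboveAndBounded}, interchanging throughout the roles of $\bjoin$ and $\bmeet$, of the right adjoint $\dei_\cM$ and the left adjoint $\deo_\cM$, and of ``bounded from above'' and ``bounded from below''. Recall from the paragraphs preceding the lemma that $F^{\dito{\hA}{\cM}} = \deo_\cM \circ \FA$ already satisfies every defining property of an extended left-continuous spectral family except possibly the condition \eq{Eq_RightContinAtNinusInfty}, namely $\bmeet_{r\in\bbR}\hF^{\dito{\hA}{\cM}}_r = \hat 0$; so the first assertion reduces entirely to verifying this identity.

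First I would note that $\deo_\cM$ fixes $\hat 0$: from the explicit formula for the left adjoint, $\deo_\cM(\hat 0) = \bmeet\{\hQ \in \PM \mid \hQ \geq \hat 0\} = \hat 0$, since $\hat 0 \in \PM$ (equivalently, $\deo_\cM$ preserves the empty join). Now assume $\hA$ is bounded from below. Then its left-continuous spectral family has some $s_0 \in \bbR$ with $\hFA_{s_0} = \hat 0$, and hence $\hF^{\dito{\hA}{\cM}}_{s_0} = \deo_\cM(\hFA_{s_0}) = \deo_\cM(\hat 0) = \hat 0$. Therefore $\bmeet_{r\in\bbR}\hF^{\dito{\hA}{\cM}}_r \leq \hF^{\dito{\hA}{\cM}}_{s_0} = \hat 0$, so \eq{Eq_RightContinAtNinusInfty} holds and $F^{\dito{\hA}{\cM}}$ is a proper extended left-continuous spectral family. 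By the spectral theorem in its left-continuous form, $\dito{\hA}{\cM}$ is then a genuine self-adjoint operator affiliated with $\cM$, with $s_0$ serving as a lower bound for $\sp(\dito{\hA}{\cM})$.

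For the second assertion I would additionally invoke boundedness from above. Since $\hat 1 \in \PM$ and $\hat 1$ is the only projection dominating $\hat 1$, one has $\deo_\cM(\hat 1) = \hat 1$; boundedness of $\hA$ from above yields some $r_0 \in \bbR$ with $\hFA_{r_0} = \hat 1$, whence $\hF^{\dito{\hA}{\cM}}_{r_0} = \deo_\cM(\hat 1) = \hat 1$, so that $r_0$ is an upper bound for $\sp(\dito{\hA}{\cM})$. Together with $s_0$ this confines the spectrum of $\dito{\hA}{\cM}$ to $[s_0,r_0]$, making $\dito{\hA}{\cM}$ bounded.

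I do not expect a genuine obstacle, since the argument is entirely parallel to Lemma \ref{Lem_BoundedFromAboveAndBounded}. The one point requiring care is the bookkeeping of handedness: because we work with the \emph{left}-continuous family $\FA$ and the left adjoint $\deo_\cM$, the condition that may fail is right-continuity at $-\infty$ rather than left-continuity at $+\infty$, and it is therefore boundedness from \emph{below}, not from above, that repairs it.
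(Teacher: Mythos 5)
Your proposal is correct and is exactly the argument the paper intends: the paper gives no separate proof of this lemma, saying only that it holds ``in analogy to'' Lemma \ref{Lem_BoundedFromAboveAndBounded}, and your dualization --- reducing everything to \eq{Eq_RightContinAtNinusInfty}, then using $\deo_\cM(\hat 0)=\hat 0$ with $\hFA_{s_0}=\hat 0$ for some real $s_0$, and $\deo_\cM(\hat 1)=\hat 1$ with $\hFA_{r_0}=\hat 1$ for the boundedness claim --- is precisely that analogy carried out explicitly. Your closing remark about the handedness bookkeeping (left-continuous $\FA$, left adjoint $\deo_\cM$, failure possible only at $-\infty$) is the one point of care, and you have it right.
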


$F^{\dito{\hA}{\cM}}=\deo_{\cM}\circ\FA$ is the composite of two left adjoints, and hence is a left adjoint itself. The corresponding right adjoint is
\begin{align}
			z^{\dito{\hA}{\cM}}=\zA\circ i: \PM \lra \eR.
\end{align}
In analogy to Prop. \ref{Prop_OuterDasAndqObsFcts}, we obtain:
\begin{proposition}			\label{Prop_InnerDasAndzFcts}
Let $\cN$ be a von Neumann algebra, let $\hA\in SA(\cM)$, and let $\cM\subset\cN$ be a von Neumann subalgebra such that the unit elements in $\cM$ and $\cN$ coincide. The function $z^{\dito{\hA}{\cM}}$ corresponding to $\dito{\hA}{\cM}$, the inner daseinisation of $\hA$ to $\cM$, is given by $z^{\dito{\hA}{\cM}}=\zA\circ i=\zA|_{\PM}$. If $\hA$ is bounded from below, then $\dito{\hA}{\cM}=\bjoin\{\hB\in SA(\cM) \mid \hB\leq_s\hA\}\in SA(\cM)$.
\end{proposition}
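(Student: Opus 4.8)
The plan is to dualise the proof of Proposition \ref{Prop_OuterDasAndqObsFcts}, interchanging the roles of meets and joins and replacing right-continuous spectral families by left-continuous ones. The first assertion is, in fact, already contained in the discussion preceding the statement: since $\deo_{\cM}$ is left adjoint to the inclusion $i$ (Lemma \ref{Lem_InclusionIsCOMLMorphism}) and $\FA$ is left adjoint to $\zA$, the map $F^{\dito{\hA}{\cM}}=\deo_{\cM}\circ\FA$ is a composite of two left adjoints. By the composability of Galois connections (Remark \ref{Rem_UnitCounitComposite}), its right adjoint is $\zA\circ i$, and since $i$ is the inclusion we obtain $z^{\dito{\hA}{\cM}}=\zA\circ i=\zA|_{\PM}$.

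For the second assertion I would first invoke Lemma \ref{Lem_BoundedFromBelowAndBounded}: when $\hA$ is bounded from below, $F^{\dito{\hA}{\cM}}$ is a proper left-continuous extended spectral family, so $\dito{\hA}{\cM}$ is a genuine element of $SA(\cM)$. The key preliminary step is to re-express the spectral order in terms of left-continuous spectral families. Using the standard passage $\hFA_r=\bjoin_{s<r}\hEA_s$ and $\hEA_r=\bmeet_{s>r}\hFA_s$, one checks that taking joins and meets preserves the defining inequalities, so that $\hB\leq_s\hA$ holds if and only if $\hF^{\hB}_r\geq\hFA_r$ for all $r\in\eR$.

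With this reformulation the identification of $\dito{\hA}{\cM}$ as a join becomes a minimality argument. By Lemma \ref{Lem_InclusionIsCOMLMorphism}, $\hF^{\dito{\hA}{\cM}}_r=\deo_{\cM}(\hFA_r)$ is the smallest projection in $\PM$ dominating $\hFA_r$. Since $\deo_{\cM}(\hFA_r)\geq\hFA_r$ for every $r$, the left-continuous reformulation gives $\dito{\hA}{\cM}\leq_s\hA$, so $\dito{\hA}{\cM}$ lies in the set $\{\hB\in SA(\cM)\mid\hB\leq_s\hA\}$. Conversely, for any $\hB\in SA(\cM)$ with $\hB\leq_s\hA$, each $\hF^{\hB}_r$ is a projection in $\PM$ dominating $\hFA_r$, so by minimality $\hF^{\hB}_r\geq\deo_{\cM}(\hFA_r)=\hF^{\dito{\hA}{\cM}}_r$ for all $r$, whence $\hB\leq_s\dito{\hA}{\cM}$. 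Thus $\dito{\hA}{\cM}$ is at once a member of the set and an upper bound for it, hence its greatest element, and therefore equals the join.

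The hard part will be the careful verification that the spectral order admits this left-continuous reformulation uniformly for all $r\in\eR$, paying attention to the boundary behaviour at $\pm\infty$, together with confirming via Lemma \ref{Lem_InclusionIsCOMLMorphism} that $\deo_{\cM}(\hFA_r)$ is indeed the least $\PM$-projection above $\hFA_r$. Once these two facts are in place, the remainder is a routine dualisation of the outer-daseinisation argument.
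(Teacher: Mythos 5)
Your proposal is correct and follows essentially the same route as the paper: the paper's own ``proof'' consists of the discussion preceding the statement (the composite-of-left-adjoints argument giving $z^{\dito{\hA}{\cM}}=\zA\circ i$, Lemma \ref{Lem_BoundedFromBelowAndBounded}, and the ``by construction'' join characterisation dual to the outer case), which you have simply spelled out. Your explicit check that the spectral order can be rephrased via left-continuous spectral families, using $\hFA_r=\bjoin_{s<r}\hEA_s$ and $\hEA_r=\bmeet_{s>r}\hFA_s$, together with the minimality property of $\deo_{\cM}$ from Lemma \ref{Lem_InclusionIsCOMLMorphism}, is exactly the detail the paper leaves implicit, and it is correct (the boundary values at $\pm\infty$ are the same for all extended spectral families, so no issue arises there).
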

Hence, $\zA$ encodes all inner daseinisations of $\hA$ to von Neumann subalgebras $\cM\subset\cN$ in a uniform way: each $z^{\dito{\hA}{\cM}}$ is given by simply restricting $\zA$ to the smaller domain $\PM$.

\subsection{Extending the domain} We return to right-continous spectral families and $q$-observable functions. If $\hA\in SA(\cM)\subset SA(\cN)$, then $\hEA_{\cN}=i\circ\hEA_{\cM}:\eR\ra\PN$, which is a composite of two right adjoints (and hence is a right adjoint itself). The corresponding left adjoint is the composite
\begin{align}
			\oA_{\cN}=\oA_{\cM}\circ\deo_{\cM}:\PN\lra\eR
\end{align}
of left adjoints. Hence, one can extend the domain of a $q$-observable function $\oA$ from $\PM$ to $\PM$ by precomposing with $\deo_{\cM}:\PN\ra\PM$.

\begin{remark}
In \cite{DI08b,DI08c,Doe11} and other places, the projection $\dito{\hP}{\cM}$ is called the \emph{inner daseinisation of the projection $\hP$ to $\cM$}, and $\doto{\hP}{\cM}$ is called the \emph{outer daseinisation of $\hP$ to $\cM$}. As mentioned above, $\dito{\hA}{\cM}$ and $\doto{\hA}{\cM}$ are called \emph{inner} and \emph{outer daseinisation of the self-adjoint operator $\hA$ to $\cM$}. These constructions play a central role in the topos approach to quantum theory \cite{DI11}. In the topos approach only bounded self-adjoint operators are considered, so by Lemma \ref{Lem_BoundedFromAboveAndBounded} (resp. Lemma \ref{Lem_BoundedFromBelowAndBounded}) their outer (resp. inner) daseinisations are always bounded self-adjoint operators as well.

The main motivation for the current article was to gain a better mathematical understanding of daseinisation of self-adjoint operators and the underlying operator-theoretic constructions. In standard quantum theory, physical quantities are represented by self-adjoint operators. In the topos approach, physical quantities are represented by certain natural transformations generalising functions from the state space of a system to the real numbers (see \cite{DI08c}). This `function-like' representation of physical quantities is based on daseinisation of self-adjoint operators. As we have seen, to each (bounded) self-adjoint operator $\hA$ there corresponds a $q$-observable function $\oA:\PN\ra\eR$ and a function $\zA:\PN\ra\eR$, and these functions give all the `daseinised' (i.e., approximated) operators $\doto{\hA}{\cM}$ resp. $\dito{\hA}{\cM}$ simply by restriction (Prop. \ref{Prop_OuterDasAndqObsFcts} resp. Prop. \ref{Prop_InnerDasAndzFcts}). In the topos approach, one considers such approximations/restrictions to all \emph{contexts}, i.e., abelian von Neumann subalgebras $V$ of $\cN$ (where $V$ and $\cN$ have the same unit element). Physically, each abelian subalgebra represents one classical perspective on the quantum system.
\end{remark}

If $\mc P$ is some complete sublattice of $\PN$, but not necessarily the lattice of projections of a von Neumann subalgebra $\cM$, then all arguments about restrictions made so far generalise straightforwardly. For example, one may consider a unital $C^*$-algebra $\cA$ and define open projections in its enveloping von Neumann algebra $\cA''$ as those projections that are suprema of nets of positive elements in $\cA$. The open projections form a complete sublattice of the lattice of all projections in $\cA''$, and self-adjoint operators whose spectral families consist only of open projections can be interpreted as analogues of continuous functions \cite{Ake70}, or semicontinuous functions \cite{Bro88}. For a lattice-theoretic treatment of the latter case, see \cite{Com06}.

\subsection{Algebraic structure}
The order-isomorphism \eq{E_SANleqs=QOleq} between the self-adjoint operators affiliated with $\cN$ and their $q$-observable functions, sending $\hA$ to $\oA$, is not a linear map. Counterexamples are easy to find: consider $\hA=\hat 1=\hQ+(\hat 1-\hQ)$, where $\hQ$ is some non-zero projection. We have, for all $\hP\in\PzN$,
\begin{equation}
			o^{\hat 1}(\hP)=\inf\{r\in\eR \mid \hE^{\hat 1}_r\geq P\}=1.
\end{equation}
Assume that $\hP$ is a projection such that $\hP\nleq\hQ$ and $\hP\nleq\hat 1-\hQ$, then 
\begin{align}
			o^{\hQ}(\hP)+o^{\hat 1-\hQ}(\hP) &=\inf\{r\in\eR \mid \hE^{\hQ}_r\geq\hP\}+\inf\{s\in\eR \mid E^{\hat 1-\hQ}_s\geq\hP\}\\
			&= 1+1 = 2,
\end{align}
so $o^{\hQ+(\hat 1-\hQ)}\neq o^{\hQ}+o^{\hat 1-\hQ}$.

Yet, a limited amount of algebraic structure is preserved by the map $\hA\mt\oA$, as we will now show. Let $\hA$ be a self-adjoint operator affiliated with a von Neumann algebra $\cN$, and let $f:\eR\ra\eR$ be a join-preserving function. By the definition of the spectral calculus (see e.g. \cite{KR83+86}), roughly speaking the map $\hA\mt f(\hA)$ shifts the spectral values of $\hA$, but leaves the spectral projections unchanged (apart from generating degeneracies if $f$ is not injective). We will make use of this in the proof of the following result:

\begin{proposition}			\label{P_Ef(A)}
Let $\hA\in SA(\cN)$, let $f:\eR\ra\eR$ be a join-preserving function, and let $g:\eR\ra\eR$ be the right adjoint of $f$. For all $r\in\eR$,
\begin{equation}
			\hE^{f(\hA)}_r=\hEA_{g(r)}.
\end{equation}
\end{proposition}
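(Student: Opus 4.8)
The plan is to establish the identity $\hE^{f(\hA)}_r = \hEA_{g(r)}$ by working at the level of spectral families and exploiting the functional calculus together with the Galois connection $(f,g)$ on $\eR$. First I would recall what the spectral family of $f(\hA)$ looks like via the spectral calculus. For a join-preserving (hence monotone) function $f:\eR\ra\eR$, the operator $f(\hA)$ has spectral projections obtained by relabelling the spectral values of $\hA$ according to $f$; concretely, I expect the relation
\begin{equation}
			\hE^{f(\hA)}_r = \bjoin\{\hEA_s \mid f(s)\leq r\}
\end{equation}
to hold, which is the natural expression for the spectral family of a monotone function of $\hA$ (the projection below level $r$ for $f(\hA)$ collects all spectral increments of $\hA$ whose relabelled value $f(s)$ does not exceed $r$). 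I would justify this carefully from the definition of the spectral calculus, paying attention to right-continuity and the behaviour at $\pm\infty$ now that we work over $\eR$.

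The key step is then to identify $\bjoin\{\hEA_s \mid f(s)\leq r\}$ with $\hEA_{g(r)}$, and this is exactly where the Galois connection does the work. Since $(f,g)$ is a Galois connection on $\eR$ (so $g$ is the right adjoint of the join-preserving map $f$), the defining equivalence \eq{Def_GaloisConnection} gives $f(s)\leq r \Leftrightarrow s\leq g(r)$. Therefore
\begin{equation}
			\{s\in\eR \mid f(s)\leq r\}=\{s\in\eR \mid s\leq g(r)\},
\end{equation}
and the join over the corresponding spectral projections becomes $\bjoin_{s\leq g(r)}\hEA_s$. Because $\EA:\eR\ra\PN$ is a monotone map, this supremum is simply $\hEA_{g(r)}$ itself (the family $(\hEA_s)_{s\leq g(r)}$ is increasing with top element $\hEA_{g(r)}$, using right-continuity and the fact that $g(r)\in\eR$ so the value is attained). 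Chaining these identities yields $\hE^{f(\hA)}_r=\hEA_{g(r)}$ as desired.

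The main obstacle is the first step, namely pinning down the spectral family of $f(\hA)$ rigorously for a merely monotone, join-preserving $f$ on the \emph{extended} reals, rather than for a continuous function on $\bbR$ where the standard functional calculus is stated. I would need to handle possible discontinuities of $f$ (which create degeneracies, as the paragraph before the Proposition anticipates) and the endpoints $\pm\infty$: one must check that $\hE^{f(\hA)}_r$ really is right-continuous in $r$ and that $g(r)$, being defined via $g(r)=\bjoin\{s \mid f(s)\leq r\}=\sup\{s \mid f(s)\leq r\}$, lands in $\eR$ so that $\hEA_{g(r)}$ makes sense. An efficient way to sidestep delicate approximation arguments would be to verify instead that the map $r\mapsto\hEA_{g(r)}$ is itself an extended right-continuous spectral family and that integrating $r$ against it reproduces $f(\hA)$; right-continuity of $r\mapsto\hEA_{g(r)}$ follows because $g$, as a right adjoint, preserves meets, and $\EA$ preserves meets as well, so the composite $\EA\circ g$ preserves meets and is therefore right-continuous by the Lemma characterising extended right-continuous spectral families. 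Uniqueness in the spectral theorem would then force $\hE^{f(\hA)}_r=\hEA_{g(r)}$, making the whole argument essentially a clean consequence of the adjoint functor theorem rather than a computation with Riemann sums.
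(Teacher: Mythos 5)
Your main line of argument is essentially the paper's own proof. The paper likewise takes as functional-calculus input that, for monotone $f$, the projection $\hE^{f(\hA)}_r$ equals $\eA((-\infty,x])=\hEA_x$ for some threshold $x=h(r)$ depending only on $r$, and then identifies $h(r)=\sup\{s\in\eR \mid f(s)\leq r\}=g(r)$; your rewriting of $\{s\in\eR \mid f(s)\leq r\}$ as $\{s\in\eR \mid s\leq g(r)\}$ via the adjunction, with the join attained at $g(r)$ because $f(g(r))\leq r$ (the counit inequality), is exactly this identification, phrased through the join formula $\hE^{f(\hA)}_r=\bjoin\{\hEA_s \mid f(s)\leq r\}$ rather than through spectral measures. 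Both versions leave the functional-calculus step at the same level of informality.

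The fallback you propose for making that step rigorous, however, does not work as stated. You suggest verifying that $r\mapsto\hEA_{g(r)}$ is an extended right-continuous spectral family and then invoking uniqueness in the spectral theorem. Meet-preservation of $\EA\circ g$ indeed gives right-continuity, but the boundary conditions can genuinely fail: the paper remarks immediately after this proposition that neither $\bmeet_{r\in\bbR}\hEA_{g(r)}=\hat 0$ nor $\bjoin_{r\in\bbR}\hEA_{g(r)}=\hat 1$ need hold in general (e.g.\ $g(r)=r_0$ for all $r\in\eR\backslash\{\infty\}$ with $\hat 0<\hEA_{r_0}<\hat 1$, and $g(\infty)=\infty$). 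These failures occur precisely when $f$ takes the value $\infty$ or $-\infty$ on a part of $\eR$ carrying spectral weight of $\hA$, in which case $f(\hA)$ is only a generalized operator, so this route requires extra hypotheses that the statement does not make explicit. More importantly, even when the boundary conditions do hold, applying uniqueness requires first knowing that $\int_{-\infty}^{\infty} r\,d\hEA_{g(r)}$ equals $f(\hA)$; that change-of-variables identity is essentially the content of the proposition itself, so the proposed sidestep is circular---it relocates, rather than removes, the appeal to the spectral calculus. The workable route is the one both you and the paper take in the main argument: accept the functional-calculus description of $\hE^{f(\hA)}$ for monotone $f$ and let the Galois connection do the bookkeeping.
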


\begin{proof}
Let $\cB(\bbR)$ denote the Borel subsets of $\bbR$, let $e^{f(\hA)}:\cB(\bbR)\ra\PN$ be the spectral measure of $f(\hA)$, and let $\eA:\cB(\bbR)\ra\PN$ be the spectral measure of $\hA$. Since $f$ is monotone, it shifts the spectral values of $\hA$ in an order-preserving way, hence we have
\begin{equation}
			\hE^{f(\hA)}_r=e^{f(\hA)}((-\infty,r])=\eA((-\infty,x])
\end{equation}
for some $x\in\eR$ that only depends on $r$, so $x=h(r)$ for some function $h:\eR\ra\eR$. If $f$ happens to be injective, we can take $h=f^{-1}$ and thus $h(r)=f^{-1}(r)$. In general, $h(r)$ must be chosen larger than all those $s\in\eR$ for which $f(s)\leq r$ holds (but not larger), so
\begin{equation}
			h(r) = \sup\{s\in\eR \mid f(s)\leq r\},
\end{equation}
that is, $h=g$, the right adjoint of $f$. Then, by the fact that the counit is smaller than the identity (see Rem. \ref{Rem_UnitCounitComposite}), $f(g(r)) \leq r$, and $f(t)>r$ for any $t>g(r)$. We obtain
\begin{equation}
			\hE^{f(\hA)}_r=e^{f(\hA)}((-\infty,r])=e^{\hA}((-\infty,g(r)])=\hEA_{g(r)}.
\end{equation}
\end{proof}

Note that the sort of functions acting on self-adjoint operators that is considered here is characterised by an order-theoretic property, viz. join-preservation. Topologically, this can be understood as functions that are lower semicontinuous. We remark that $\hEA\circ g:\eR\ra\eR$ is meet-preserving, but neither $\bmeet_{r\in\bbR}(\hEA\circ g)(r)=\bmeet_{r\in\bbR}\hEA_{g(r)}=\hat 0$ nor $\bjoin_{r\in\bbR}\hEA_{g(r)}=\hat 1$ need to hold in general. (E.g. pick $g(r)=r_0$ for all $r\in\eR\backslash\{\infty\}$, where $r_0$ is such that $\hat 0<\EA_{r_0}<\hat 1$, and $g(\infty)=\infty$.)

We can now express the $q$-observable function of $f(\hA)$ in terms of the $q$-observable function of $\hA$:

\begin{theorem}			\label{Thm_ofA=foA}
Let $\hA$ be a self-adjoint operator affiliated with a von Neumann algebra $\cN$, and let $f:\eR\ra\eR$ be a join-preserving function. Then
\begin{equation}
			o^{f(\hA)}=f(\oA).
\end{equation}
\end{theorem}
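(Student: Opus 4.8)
The plan is to deduce the identity purely formally from the composition of Galois connections, using Proposition \ref{P_Ef(A)} as the only ingredient that carries any operator-theoretic content. First I would record the two Galois connections in play: the pair $(\oA,\EA)$ between $\PN$ and $\eR$, established earlier in this section, and the pair $(f,g)$ on $\eR$, where $g$ is the right adjoint of $f$. The latter exists because $\eR$ is a complete lattice and $f$ preserves all joins, so Theorem \ref{Thm_AdjFunctorThmForPosets} applies; in both connections $\oA$ and $f$ are the left adjoints while $\EA$ and $g$ are the right adjoints.

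Next I would invoke the composition rule for Galois connections from Remark \ref{Rem_UnitCounitComposite}, taking $P=\PN$, $Q=R=\eR$, with $\oA:\PN\ra\eR$ and $\EA:\eR\ra\PN$ as the first connection and $f:\eR\ra\eR$, $g:\eR\ra\eR$ as the second. The rule yields a new Galois connection $(f\circ\oA,\;\EA\circ g)$ between $\PN$ and $\eR$, in which $f\circ\oA$ is left adjoint to $\EA\circ g$.

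The crucial identification is that $\EA\circ g$ is exactly the extended spectral family of $f(\hA)$. Indeed, Proposition \ref{P_Ef(A)} states $\hE^{f(\hA)}_r=\hEA_{g(r)}$ for all $r\in\eR$, which is precisely the equality of maps $E^{f(\hA)}=\EA\circ g$. Thus $\EA\circ g$ is a right adjoint whose left adjoint is, by definition, the $q$-observable function $o^{f(\hA)}$. Since adjoints are unique, comparing this with the composite connection forces $o^{f(\hA)}=f\circ\oA=f(\oA)$, as claimed.

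The only thing to get right is the bookkeeping of which maps are left versus right adjoints and that the two connections are composed in the correct order; once Proposition \ref{P_Ef(A)} supplies $E^{f(\hA)}=\EA\circ g$, no analytic input remains and the boundary behaviour at $\pm\infty$ (for instance $\oA(\hat 0)=-\infty$ together with $f(-\infty)=-\infty$ by join-preservation) is automatically consistent, so it need not be verified by hand. As a sanity check one can instead argue pointwise: writing $o^{f(\hA)}(\hP)=\inf\{r\in\eR\mid\hP\leq\hEA_{g(r)}\}$, one uses the connection $(\oA,\EA)$ to replace $\hP\leq\hEA_{g(r)}$ by $\oA(\hP)\leq g(r)$, then the connection $(f,g)$ to replace this by $f(\oA(\hP))\leq r$, so the infimum collapses to $f(\oA(\hP))$. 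I would present the adjoint-uniqueness argument as the main proof, since it is cleaner and entirely in the spirit of the categorical framework.
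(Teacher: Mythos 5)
Your proposal is correct, and your main argument takes a genuinely cleaner route than the paper, though it rests on exactly the same operator-theoretic input. The paper's own proof is precisely the pointwise computation you relegate to a sanity check: starting from $o^{f(\hA)}(\hP)=\inf\{r\in\eR \mid \hE^{f(\hA)}_r\geq\hP\}$, it applies Prop.~\ref{P_Ef(A)}, then unwinds the adjunction $(\oA,\EA)$ to replace $\hEA_{g(r)}\geq\hP$ by $g(r)\geq\oA(\hP)$, then the adjunction $(f,g)$ to replace that by $r\geq f(\oA(\hP))$, and collapses the infimum. Your primary argument instead invokes the composition rule of Remark~\ref{Rem_UnitCounitComposite} to get the Galois connection $(f\circ\oA,\;\EA\circ g)$, identifies $\EA\circ g$ with $E^{f(\hA)}$ via Prop.~\ref{P_Ef(A)}, and concludes by uniqueness of left adjoints; this eliminates all computation and uses only facts already recorded in the paper's preliminaries together with the definition of $o^{f(\hA)}$ as the left adjoint of its spectral family, so it is arguably the more natural proof within the paper's categorical framework. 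What the paper's pointwise version buys in exchange is explicitness: the chain of equivalences shows the mechanism directly and is self-contained at the level of elements. One caveat, which your proof inherits from the paper rather than introduces: both arguments presuppose, via Prop.~\ref{P_Ef(A)}, that $f(\hA)$ has a genuine right-continuous extended spectral family, and the paper's own remark following that proposition (the example with $g$ constant on $\eR\backslash\{\infty\}$) shows this can fail for degenerate join-preserving $f$; neither proof addresses that boundary case.
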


\begin{proof}
Let $g:\eR\ra\eR$ denote the right adjoint of $f$. By definition, $\oA(\hP)=\inf\{t\in\eR \mid \hEA_t\geq\hP\}$, so
\begin{equation}			\label{IE_EAoA}
			\hEA_{g(r)}\geq\hP\quad\Longleftrightarrow\quad g(r)\geq\oA(\hP).
\end{equation}
Moreover, $f:\eR\ra\eR$ is the left adjoint of $g:\eR\ra\eR$, so
\begin{equation}			\label{IE_Adj}
			g(r)\geq z\quad\Longleftrightarrow\quad r\geq f(z).
\end{equation}
We obtain, for all $\hP\in\PN$,
\begin{align}
			o^{f(\hA)}(\hP) &=\inf\{r\in\eR \mid \hE^{f(\hA)}_r\geq\hP\}\\
			&\overset{\text{Prop. }\ref{P_Ef(A)}}{=}\inf\{r\in\eR \mid \hEA_{g(r)}\geq\hP\}\\
			&\overset{\eq{IE_EAoA}}{=}\inf\{r\in\eR \mid g(r)\geq\oA(\hP)\}\\
			&\overset{\eq{IE_Adj}}{=}\inf\{r\in\eR \mid r\geq f(\oA(\hP))\}\\
			&=f(\oA(\hP)).
\end{align}
\end{proof}

\begin{corollary}
Let $t\in\bbR$ and $\hA\in SA(\cN)$, then $o^{\hA+t\hat 1}=\oA+t$. If $s\in\bbR^+$ is a positive real number, then $o^{s\hA}=s\oA$.
\end{corollary}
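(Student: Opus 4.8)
The plan is to obtain both identities as immediate instances of Theorem \ref{Thm_ofA=foA}, which gives $o^{f(\hA)}=f(\oA)$ for any join-preserving $f:\eR\ra\eR$. For each claim I would exhibit the appropriate $f$, verify that it is join-preserving on $\eR$, and confirm that the functional (spectral) calculus applied to $f$ sends $\hA$ to the operator named in the statement. Granting those two checks, the corollary is a one-line application in each case, with the understanding that the right-hand sides $\oA+t$ and $s\oA$ denote the pointwise functions $\hP\mapsto\oA(\hP)+t$ and $\hP\mapsto s\,\oA(\hP)$, which is exactly what $f(\oA)=f\circ\oA$ means.

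For the translation $\hA\mapsto\hA+t\hat 1$ I take $f(x)=x+t$. Since $t\in\bbR$ is finite, $f$ fixes the endpoints, $f(-\infty)=-\infty$ and $f(\infty)=\infty$, and for any nonempty family $(x_i)_{i\in I}\subseteq\eR$ one has $\sup_i(x_i+t)=(\sup_i x_i)+t$; the empty join is preserved because $f(\bjoin\emptyset)=f(-\infty)=-\infty=\sup\emptyset$. Hence $f$ is join-preserving. As functional calculus applied to $f(x)=x+t$ produces precisely $\hA+t\hat 1$, Theorem \ref{Thm_ofA=foA} gives $o^{\hA+t\hat 1}=f(\oA)=\oA+t$. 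For the positive scaling $\hA\mapsto s\hA$ with $s\in\bbR^+$ I take $f(x)=sx$. Because $s>0$, this is an order-isomorphism of $\eR$ with $f(-\infty)=-\infty$ and $f(\infty)=\infty$, so $s\cdot\sup_i x_i=\sup_i(sx_i)$ for nonempty families and again $f(\bjoin\emptyset)=-\infty$; thus $f$ is join-preserving, functional calculus gives $f(\hA)=s\hA$, and Theorem \ref{Thm_ofA=foA} yields $o^{s\hA}=f(\oA)=s\oA$.

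The only points requiring care — and the closest thing to an obstacle — are the behaviour of $f$ at $\pm\infty$ and on the empty join: one must check $f(-\infty)=-\infty$ so that join-preservation holds on the bottom element, which is also what keeps condition (a) of Definition \ref{Def_AbsObsFct} intact for $f(\oA)$, and one must note that finiteness of $t$ and positivity of $s$ are exactly what prevent $f$ from reversing order or from collapsing the supremum at $\infty$. In particular, for $s<0$ the map $x\mapsto sx$ is order-\emph{reversing} and hence not join-preserving, so Theorem \ref{Thm_ofA=foA} does not apply; this is consistent with the fact, developed in the section on $q$-antonymous functions, that scaling $\oA$ by a negative number yields a $q$-antonymous rather than a $q$-observable function. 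I would restrict the statement to $s\in\bbR^+$ precisely for this reason.
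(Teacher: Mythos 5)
Your proposal is correct and follows essentially the same route as the paper: both apply Theorem \ref{Thm_ofA=foA} to the join-preserving maps $f_1(r)=r+t$ and $f_2(r)=sr$ on $\eR$ (with the conventions $-\infty+t=-\infty$, $\infty+t=\infty$). Your additional checks on the empty join and the remark about why $s<0$ fails are sound but go beyond what the paper records.
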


\begin{proof}
The function
\begin{align}
			f_1:\eR &\lra \eR\\
			r &\lmt r+t,
\end{align}
where $-\infty+t=-\infty$ and $\infty+t=\infty$, preserves suprema. Hence, by Thm. \ref{Thm_ofA=foA}, it holds that
\begin{equation}
			o^{\hA+t\hat 1}=o^{f_1(\hA)}=f_1(\oA)=\oA+t.
\end{equation}
Similarly,
\begin{align}
			f_2:\eR &\lra \eR\\
			r &\lmt sr
\end{align}
preserves suprema, so $o^{s\hA}=s\oA$.
\end{proof}

\section{$q$-antonymous functions}			\label{Sec_qAntonymousFunctions}
It is natural to ask how a $q$-observable function $\oA$ behaves when multiplied by a \emph{negative} real number. The essential issue of course is the behaviour under multiplication by $-1$, which we will consider now. This will lead us to the definition of a second sort of functions associated with self-adjoint operators, besides the $q$-observable functions. 

This new sort of functions, called \emph{$q$-antonymous functions}, is closely related to both $q$-observable functions and the functions of the form $\zA$ that were considered in subsection \ref{Subsec_InnerDasAndzAFcts}.


Let $\hA$ be a self-adjoint operator affiliated with a von Neumann algebra $\cN$, and let $\oA$ be its $q$-observable function. Then, for all $\hP\in\PN$,
\begin{align}
			-\oA(\hP) &= -\inf\{r\in\eR \mid \hP\leq\hEA_r\}\\
			&= \sup\{-r\in\eR \mid \hP\leq\hEA_r\}\\
			&= \sup\{r\in\eR \mid \hP\leq\hEA_{-r}\}.			\label{Eq_-oA}
\end{align}

Let $F^{-\hA}=(\hF^{-\hA}_r)_{r\in\eR}$ be the unique left-continuous extended spectral family of $-\hA$. As is well known,
\begin{equation}
			\forall r\in\eR: \hF^{-\hA}_r = \hat 1-\hEA_{-r}.
\end{equation}
Hence, we obtain from equation \eq{Eq_-oA}
\begin{equation}
			-\oA(\hP) = \sup\{r\in\eR \mid \hP\leq\hat 1-\hF^{-\hA}_r\}.
\end{equation}
As one may have expected, $-\oA$ relates to $-\hA$ (but $-\oA$ is not a $q$-observable function).

\begin{definition}
Let $\hA$ be a self-adjoint operator affiliated with a von Neumann algebra $\cN$. The function
\begin{align}
			\aA: \PN &\lra \eR\\			\nonumber
			\hP &\lmt \sup\{r\in\eR \mid \hP\leq\hat 1-\hFA_r\}
\end{align}
is called the \emph{$q$-antonymous function associated with $\hA$}. The set of $q$-antonymous functions of self-adjoint operators affiliated with $\cN$ is denoted $QA(\PN,\eR)$.
\end{definition}

We already saw that there is a bijection
\begin{align}			\label{Def_BijectionQO=-QA}
			n: QO(\PN,\eR) &\lra QA(\PN,\eR)\\			\nonumber
			\oA &\lmt -\oA = a^{-\hA}
\end{align}
between the sets of $q$-observable and $q$-antonymous functions associated with a von Neumann algebra $\cN$. Thm. \ref{Thm_SA(N)=QO} implies that there is a bijection between $QA(\PN,\eR)$ and $SA(\cN)$, the self-adjoint operators affiliated with $\cN$. Hence,  $q$-antonymous functions provide another representation of self-adjoint operators by real-valued functions.

\begin{proposition}
There is an order-isomorphism
\begin{align}
			\gamma: SA(\cN) &\lra QA(\PN,\eR)\\			\nonumber
			\hA &\lmt \aA.
\end{align}
\end{proposition}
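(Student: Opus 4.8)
The plan is to establish that $\gamma$ is an order-isomorphism by reducing everything to the already-established order-isomorphism $\phi:(SA(\cN),\leq_s)\ra(QO(\PN,\eR),\leq)$ from Proposition \ref{Prop_OrderIso} and the bijection $n:QO(\PN,\eR)\ra QA(\PN,\eR)$, $\oA\mt -\oA$, from \eq{Def_BijectionQO=-QA}. The key observation, which has essentially been set up in the discussion preceding the definition of $q$-antonymous functions, is that for every $\hA\in SA(\cN)$ we have the factorisation $\gamma=n\circ\phi$. Indeed, by the computation in \eq{Eq_-oA} and the definition of $\aA$ we have $\aA=-o^{-\hA}$; but one must be slightly careful here about which operator is being negated.

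First I would pin down the precise relationship between $\gamma$, $n$, and $\phi$. From the displayed bijection \eq{Def_BijectionQO=-QA}, the map $n$ sends $\oA$ to $-\oA=a^{-\hA}$. Reading this at the operator $-\hA$ in place of $\hA$, we get $n(o^{-\hA})=-o^{-\hA}=a^{\hA}=\aA$. Hence $\gamma(\hA)=\aA=n(o^{-\hA})=n(\phi(-\hA))$, so $\gamma=n\circ\phi\circ\nu$, where $\nu:SA(\cN)\ra SA(\cN)$ is the negation map $\hA\mt-\hA$. I would verify this factorisation explicitly using the formula $\hFA_r=\hat 1-\hEA_{-r}$ for the left-continuous spectral family together with the definitions of $\oA$ and $\aA$, confirming that $\gamma(\hA)(\hP)=\sup\{r\in\eR\mid\hP\leq\hat 1-\hFA_r\}$ equals $-o^{-\hA}(\hP)=-\inf\{s\in\eR\mid\hP\leq\hE^{-\hA}_s\}$, which is exactly the chain of equalities in \eq{Eq_-oA} applied to $-\hA$.

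Next I would check that each factor in $\gamma=n\circ\phi\circ\nu$ is a bijection with the correct order behaviour. The map $\phi$ is an order-isomorphism by Proposition \ref{Prop_OrderIso}. The negation map $\nu:(SA(\cN),\leq_s)\ra(SA(\cN),\leq_s)$ is an order-\emph{reversing} bijection: since $\hE^{-\hA}_r=\hat 1-\hEA_{(-r)^-}$ (left limit), one sees that $\hA\leq_s\hB$ if and only if $-\hB\leq_s-\hA$, so $\nu$ reverses $\leq_s$ and is an involution. Finally $n$ is, by inspection, an order-\emph{reversing} bijection for the pointwise order, since multiplication by $-1$ reverses the order on $\eR$ pointwise. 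Composing two order-reversing maps with one order-preserving map yields an order-preserving bijection, and the same argument applied to the inverses shows the inverse is order-preserving; hence $\gamma$ is an order-isomorphism onto $QA(\PN,\eR)$, where the latter carries the pointwise order inherited from $\eR^{\PN}$.

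The main obstacle I anticipate is bookkeeping the interaction between left- and right-continuity under negation, rather than any deep difficulty: the negation $-\hA$ naturally converts the right-continuous family $\EA$ into a left-continuous family, and $\aA$ is defined via the left-continuous $\hFA$, so one must make sure the suprema/infima and the strict-versus-non-strict inequalities match up at points of the spectrum. Concretely, the identity $\aA=-o^{-\hA}$ must be checked to hold at \emph{every} projection, including the edge cases $\hP=\hat 0$ (where both sides give $-\sup\emptyset=+\infty$ after negation, requiring the conventions $-(-\infty)=\infty$) and $\hP=\hat 1$ for operators unbounded above or below. Once that identity is verified pointwise, the order-isomorphism statement follows formally from the composition argument above, so I would present the pointwise identity as the one genuine lemma and treat the order-theoretic conclusion as a short corollary of Proposition \ref{Prop_OrderIso} and the order-reversing nature of $n$ and $\nu$.
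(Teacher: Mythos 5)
Your proposal is correct and takes essentially the same route as the paper: the paper's proof is exactly the chain of equivalences $\hA\leq_s\hB \Leftrightarrow -\hA\geq_s-\hB \Leftrightarrow o^{-\hA}\geq o^{-\hB} \Leftrightarrow -o^{-\hA}\leq-o^{-\hB} \Leftrightarrow \aA\leq a^{\hB}$, which is your factorisation $\gamma=n\circ\phi\circ\nu$ (order-reversing, order-preserving, order-reversing) unpacked pointwise, resting on Prop.~\ref{Prop_OrderIso} and the identity $\aA=-o^{-\hA}$ already recorded in \eq{Def_BijectionQO=-QA}. The extra verifications you flag (the behaviour of spectral families under negation and the edge cases at $\hat 0$, $\hat 1$) are handled in the paper in the discussion preceding the definition of $q$-antonymous functions rather than in the proof itself.
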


\begin{proof}
This can be proven in a similar way as Prop. \ref{Prop_OrderIso}, or using Prop. \ref{Prop_OrderIso}, more directly as follows: for all $\hA,\hB\in SA(\cN)$,
\begin{align}
			\hA\leq_s\hB &\Longleftrightarrow -\hA\geq_s -\hB\\
			&\Longleftrightarrow o^{-\hA}\geq o^{-\hB}\\
			&\Longleftrightarrow -o^{-\hA}\leq -o^{-\hB}\\
			&\Longleftrightarrow \aA\leq a^{\hB}.
\end{align}
\end{proof}

Clearly, we have, for all $\hA$ affiliated with $\cN$:
\begin{itemize}
	\item [(a)] $\aA:\PN\ra\eR$ is antitone (order-reversing),
	\item [(b)] $\im\aA=-\im o^{-\hA}=-(\sp(-\hA)\cup\{-\infty\})=\sp\hA\cup\{\infty\}$. In particular, $\aA(\hat 0)=\infty$ and $\aA(\hat 1)=\inf\sp\hA$, which is a minimum if $\sp\hA$ is bounded from below, and $-\infty$ otherwise.
\end{itemize}

Moreover, we have:
\begin{lemma}
Let $\hA$ be a self-adjoint operator affiliated with a von Neumann algebra $\cN$, and let $\oA$ and $\aA$ be its $q$-observable resp. $q$-antonymous function. Then
\begin{equation}
			\forall \hP\in\PN\backslash\{\hat 0,\hat 1\}: \aA(\hP)\leq\oA(\hP).
\end{equation}
\end{lemma}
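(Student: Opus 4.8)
The plan is to argue directly from the two defining formulas, $\oA(\hP)=\inf\{s\in\eR \mid \hP\leq\hEA_s\}$ and $\aA(\hP)=\sup\{r\in\eR \mid \hP\leq\hat 1-\hFA_r\}$, where $\hEA$ is the right-continuous and $\hFA$ the left-continuous extended spectral family of $\hA$. First I would note that both index sets are nonempty: the first contains $\infty$, since $\hEA_\infty=\hat 1\geq\hP$, and the second contains $-\infty$, since $\hFA_{-\infty}=\hat 0$ gives $\hat 1-\hFA_{-\infty}=\hat 1\geq\hP$. Hence the infimum and the supremum are genuine elements of $\eR$.

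The only structural fact I need relating the two families is that the left-continuous family is the left limit of the right-continuous one, $\hFA_r=\bjoin_{t<r}\hEA_t$. In particular, $s<r$ forces $\hEA_s\leq\hFA_r$, and therefore $\hat 1-\hFA_r\leq\hat 1-\hEA_s$.

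The heart of the proof is a single pointwise claim: if $\hP\neq\hat 0$, then for every $s$ with $\hP\leq\hEA_s$ and every $r$ with $\hP\leq\hat 1-\hFA_r$ one has $r\leq s$. I would establish this by contradiction. Suppose $r>s$. By the relation above, $\hP\leq\hat 1-\hFA_r\leq\hat 1-\hEA_s$, while simultaneously $\hP\leq\hEA_s$. Since $\hEA_s$ and $\hat 1-\hEA_s$ are orthocomplementary projections, $\hEA_s\meet(\hat 1-\hEA_s)=\hat 0$ in the orthomodular lattice $\PN$, whence $\hP\leq\hat 0$, i.e. $\hP=\hat 0$, contradicting the hypothesis. (When $s$ or $r$ equals $\pm\infty$ the inequality $r\leq s$ is automatic, so all the content sits in the finite case.) Granting the claim, every element of the set defining $\aA(\hP)$ lies below every element of the set defining $\oA(\hP)$, so $\aA(\hP)=\sup\{r \mid \hP\leq\hat 1-\hFA_r\}\leq\inf\{s \mid \hP\leq\hEA_s\}=\oA(\hP)$, which is the assertion.

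There is no deep obstacle here; the proof is elementary once the setup is right. The one point demanding care is invoking the left-limit relation $\hFA_r=\bjoin_{t<r}\hEA_t$ correctly (so that $s<r$ yields $\hEA_s\leq\hFA_r$) together with the orthocomplementation identity $\hEA_s\meet(\hat 1-\hEA_s)=\hat 0$; the remainder is bookkeeping with the endpoints $\pm\infty$ of $\eR$. I would also remark that the argument in fact uses only $\hP\neq\hat 0$, and that $\hat 0$ must genuinely be excluded, since $\aA(\hat 0)=\infty>-\infty=\oA(\hat 0)$.
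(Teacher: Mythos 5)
Your proof is correct and takes essentially the same route as the paper's: both rest on the observation that a non-zero projection cannot lie below both $\hEA_s$ and its orthocomplement, which forces the index set defining the supremum $\aA(\hP)$ to sit entirely below the index set defining the infimum $\oA(\hP)$. The only difference is bookkeeping: the paper first replaces $\hFA$ by $\hEA$ inside the supremum, whereas you invoke the left-limit identity $\hFA_r=\bjoin_{t<r}\hEA_t$ directly (and you correctly note that only $\hP\neq\hat 0$ is actually needed).
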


\begin{proof}
Let $\EA$ denote the right-continuous spectral family of $\hA$ and $\FA$ the left-continuous one. Observe that
\begin{align}
			\aA(\hP) &= \sup\{r\in\eR \mid \hP\leq\hat 1-\hFA_r\}\\
			&= \sup\{r\in\eR \mid \hP\leq\hat 1-\hEA_r\}.
\end{align}

Let $r\in\eR$ be such that $\hP\leq\hEA_r$, then $\hP\nless\hat 1-\hEA_r$, so $r>\aA(\hP)$. Since
\begin{equation}
			\oA(\hP)=\inf\{r\in\eR \mid \hP\leq\hEA_r\},
\end{equation}
it follows that $\oA(\hP)\geq\aA(\hP)$.
\end{proof}

Antonymous functions were first introduced by one of us (AD) in a similar form in \cite{Doe05}, where some of their properties were proven. De Groote was the first to point out \cite{deG05} that $-\oA=a^{-\hA}$ (for similar functions).

\begin{lemma}
Let $\hA$ be a self-adjoint operator affiliated with $\cN$, and let $\zA$ be the right adjoint of the left-continuous extended spectral family $\FA$ (see subsection \ref{Subsec_InnerDasAndzAFcts}). Then, for all $\hP\in\PN$, $\aA(\hP)=\zA(\hat 1-\hP)$.
\end{lemma}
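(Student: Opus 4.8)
The plan is to prove the identity $\aA(\hP)=\zA(\hat 1-\hP)$ by computing both sides explicitly and showing they reduce to the same supremum over the left-continuous spectral family. First I would recall the two defining formulas. The $q$-antonymous function is
\begin{equation}
			\aA(\hP)=\sup\{r\in\eR \mid \hP\leq\hat 1-\hFA_r\}.
\end{equation}
The function $\zA$ is, by definition, the right adjoint of the join-preserving map $\FA:\eR\ra\PN$; since right adjoints preserve meets and are given explicitly by the adjoint functor theorem for posets (Thm.\ \ref{Thm_AdjFunctorThmForPosets}), we have
\begin{equation}
			\zA(\hQ)=\sup\{r\in\eR \mid \hFA_r\leq\hQ\}
\end{equation}
for all $\hQ\in\PN$. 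Substituting $\hQ=\hat 1-\hP$ gives
\begin{equation}
			\zA(\hat 1-\hP)=\sup\{r\in\eR \mid \hFA_r\leq\hat 1-\hP\}.
\end{equation}

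The heart of the argument is then to show that the two index sets
\begin{equation}
			\{r\in\eR \mid \hP\leq\hat 1-\hFA_r\}\quad\text{and}\quad\{r\in\eR \mid \hFA_r\leq\hat 1-\hP\}
\end{equation}
coincide, whence their suprema agree and the lemma follows. This is where I expect the one genuine (though modest) obstacle to lie: the conditions $\hP\leq\hat 1-\hFA_r$ and $\hFA_r\leq\hat 1-\hP$ are related by taking orthocomplements, and I must verify that orthocomplementation reverses the order correctly in the orthomodular lattice $\PN$. Concretely, using that $\hat 1-(\cdot)$ is an order-reversing involution on $\PN$ (the orthocomplement), the inequality $\hP\leq\hat 1-\hFA_r$ is equivalent to $\hat 1-(\hat 1-\hFA_r)\leq\hat 1-\hP$, i.e.\ $\hFA_r\leq\hat 1-\hP$. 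Thus the two sets are literally equal.

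Combining these observations, I would write
\begin{align}
			\aA(\hP) &= \sup\{r\in\eR \mid \hP\leq\hat 1-\hFA_r\}\\
			&= \sup\{r\in\eR \mid \hFA_r\leq\hat 1-\hP\}\\
			&= \zA(\hat 1-\hP),
\end{align}
which is the claim. The only subtlety worth flagging explicitly is the passage between the antitone $q$-antonymous picture and the monotone spectral family, and I would make sure the orthocomplement step is stated as the single lemma it rests on, namely that $\hP\leq\hat 1-\hQ\Leftrightarrow\hQ\leq\hat 1-\hP$ for projections. Everything else is a direct substitution of the explicit adjoint formula from Thm.\ \ref{Thm_AdjFunctorThmForPosets} and the definition of $\aA$, so no nontrivial computation remains.
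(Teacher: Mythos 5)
Your proposal is correct and follows essentially the same route as the paper: both write out $\zA$ explicitly via the adjoint functor theorem as $\zA(\hQ)=\sup\{r\in\eR \mid \hFA_r\leq\hQ\}$ and then observe that $\hP\leq\hat 1-\hFA_r$ if and only if $\hFA_r\leq\hat 1-\hP$, since orthocomplementation is an order-reversing involution on $\PN$. The only difference is that you spell out the equality of the two index sets more explicitly, which the paper compresses into a single sentence.
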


\begin{proof}
The adjoint functor theorem for posets gives the concrete form of $\zA$,
\begin{align}
			\zA:\PN &\lra \eR\\			\nonumber
			\hP &\lmt \sup\{r\in\eR \mid \hFA_r\leq\hP\}.
\end{align}
Since $\hat 1-\hFA_r\geq\hP$ if and only if $\hFA_r\leq\hat 1-\hP$, we obtain $\aA(\hP)=\zA(\hat 1-\hP)$.
\end{proof}

Let
\begin{align}
			c:\PN &\lra \PN\\			\nonumber
			\hP &\lmt \hat 1-\hP.
\end{align}
The previous lemma shows that $\aA=\zA\circ c$. Since $\zA$ is a right adjoint, it preserves meets, and $c$ maps joins in $\PN$ to meets in $\PN$, so $\aA$ maps joins in $\PN$ to meets in $\eR$, i.e., to infima. $\aA$ has an adjoint that maps joins in $\eR$, i.e., suprema, to meets in $\PN$, but we do not need this here. It is straightforward to show the following analogue of Prop. \ref{Prop_InnerDasAndzFcts}:

\begin{proposition}			\label{Prop_InnerDasAndAntonFcts}
Let $\cN$ be a von Neumann algebra, let $\hA\in SA(\cM)$, and let $\cM\subset\cN$ be a von Neumann subalgebra such that the unit elements in $\cM$ and $\cN$ coincide. The function $a^{\dito{\hA}{\cM}}$ corresponding to $\dito{\hA}{\cM}$, the inner daseinisation of $\hA$ to $\cM$, is given by $a^{\dito{\hA}{\cM}}=\aA\circ i=\aA|_{\PM}$. If $\hA$ is bounded from below, then $\dito{\hA}{\cM}=\bjoin\{\hB\in SA(\cM) \mid \hB\leq_s\hA\}\in SA(\cM)$.
\end{proposition}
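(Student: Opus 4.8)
The plan is to reduce the statement entirely to Proposition \ref{Prop_InnerDasAndzFcts} together with the preceding lemma, which expresses the $q$-antonymous function of a self-adjoint operator through the associated function $z$ via the complementation $c:\hP\mt\hat 1-\hP$, i.e. $\aA=\zA\circ c$. The key point to exploit is that the identity $a^{\hB}(\hP)=z^{\hB}(\hat 1-\hP)$ is purely formal: it follows from the definitions of $a^{\hB}$ and $z^{\hB}$ in terms of the left-continuous (possibly weak) extended spectral family $F^{\hB}$, so it applies verbatim to the operator $\dito{\hA}{\cM}$ together with its family $F^{\dito{\hA}{\cM}}=\deo_{\cM}\circ\FA$. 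Thus I would first record $a^{\dito{\hA}{\cM}}(\hP)=z^{\dito{\hA}{\cM}}(\hat 1-\hP)$ for all $\hP\in\PM$, equivalently $a^{\dito{\hA}{\cM}}=z^{\dito{\hA}{\cM}}\circ c$.

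Next I would invoke the first assertion of Proposition \ref{Prop_InnerDasAndzFcts}, namely $z^{\dito{\hA}{\cM}}=\zA\circ i=\zA|_{\PM}$. For $\hP\in\PM$ the complement $\hat 1-\hP$ again lies in $\PM$, so $z^{\dito{\hA}{\cM}}(\hat 1-\hP)=\zA(\hat 1-\hP)$, and applying the lemma once more, now to $\hA$ regarded as affiliated with $\cN$, gives $\zA(\hat 1-\hP)=\aA(\hP)$. Chaining these three equalities yields $a^{\dito{\hA}{\cM}}(\hP)=\aA(\hP)$ for every $\hP\in\PM$, that is, $a^{\dito{\hA}{\cM}}=\aA|_{\PM}=\aA\circ i$, which is the first claim.

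For the second assertion I would appeal directly to Lemma \ref{Lem_BoundedFromBelowAndBounded}: when $\hA$ is bounded from below, $F^{\dito{\hA}{\cM}}=\deo_{\cM}\circ\FA$ is a proper extended left-continuous spectral family, so $\dito{\hA}{\cM}$ is a genuine element of $SA(\cM)$, and the characterisation $\dito{\hA}{\cM}=\bjoin\{\hB\in SA(\cM)\mid\hB\leq_s\hA\}$ is precisely the content of the last sentence of Proposition \ref{Prop_InnerDasAndzFcts}. The only point demanding a moment's care is the commutation of orthocomplementation with restriction: that $c$ maps $\PM$ into itself and that the complement computed in $\cM$ coincides with the one computed in $\cN$. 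This is immediate from the hypothesis that $\cM$ and $\cN$ share the unit $\hat 1$, which is exactly why that hypothesis is imposed, and it is the reason the argument contains no genuine obstacle — all the substantive work has already been carried out in establishing $\aA=\zA\circ c$ and in Proposition \ref{Prop_InnerDasAndzFcts}.
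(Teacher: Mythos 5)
Your proposal is correct and is essentially the paper's own argument: the paper states this proposition without a written proof (``it is straightforward to show the following analogue of Prop.~\ref{Prop_InnerDasAndzFcts}''), having just established $\aA=\zA\circ c$ and Prop.~\ref{Prop_InnerDasAndzFcts}, and your chain $a^{\dito{\hA}{\cM}}(\hP)=z^{\dito{\hA}{\cM}}(\hat 1-\hP)=\zA(\hat 1-\hP)=\aA(\hP)$ for $\hP\in\PM$, together with the appeal to Lemma~\ref{Lem_BoundedFromBelowAndBounded} and the final assertion of Prop.~\ref{Prop_InnerDasAndzFcts} for the bounded-from-below case, is exactly the intended filling-in of that gap. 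Your closing observation --- that $c$ maps $\PM$ into itself and that orthocomplements computed in $\cM$ and in $\cN$ agree because the two algebras share the unit --- is indeed the only point requiring care, and you have handled it correctly.
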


Hence, just like the function $\zA$, the $q$-antonymous function $\aA$ encodes all inner daseinisations of $\hA$ to von Neumann subalgebras $\cM\subset\cN$ in a uniform way: each $a^{\dito{\hA}{\cM}}$ is given by simply restricting $\aA$ to the smaller domain $\PM$. The advantage of using $\aA$ is equation \eq{Def_BijectionQO=-QA}, $\aA=-o^{-\hA}$, which relates $q$-observable and $q$-antonymous functions, and clarifies how $q$-observable functions behave under multiplication by $-1$.

\section{Physical interpretation and outlook}			\label{Sec_PhysInterpret}
We have shown that to each self-adjoint operator $\hA$ affiliated with a von Neumann algebra $\cN$, one can associate a $q$-observable function $\oA:\PN\ra\eR$, which is the left adjoint of the (extended) spectral family $\EA:\eR\ra\PN$ of $\hA$. Conversely, by Thm. \ref{Thm_SA(N)=QO} each self-adjoint operator affiliated with $\cN$ arises from an abstract $q$-observable function, i.e., a join-preserving function $o:\PN\ra\eR$ such that
\begin{itemize}
	\item [(a)] $o(\hP)>-\infty$ for all $\hP>\hat 0$ (which characterises a \emph{weak} $q$-observable function),
	\item [(b)] there is a family $(\hP_i)_{i\in I}\subseteq\PN$ with $\bjoin_{i\in I}\hP_i=\hat 1$ such that $o(\hP_i)\lneq\infty$ for all $i\in I$.
\end{itemize}
The $q$-observable function corresponding to $\hA$ is denoted $\oA$. It is easy to show that the image of $\oA$ on non-zero projections is the spectrum of $\hA$ (see Lemma \ref{Lem_im(oA)=spA}). If $\hA$ is bounded from above, then the family in condition (b) can be chosen to be $\{\hat 1\}$ trivially.

This gives a new, non-standard way of representing physical quantities in quantum theory, which are usually described by self-adjoint operators in von Neumann algebras, or -- for physical quantities like position and momentum which have unbounded spectra -- by self-adjoint operators affiliated with a von Neumann algebra.

The spectral order makes $SA(\cN)$, the set of self-adjoint operators affiliated with $\cN$, a conditionally complete lattice $(SA(\cN),\leq_s)$. We saw in Prop. \ref{Prop_OrderIso} that this lattice is isomorphic to the conditionally complete lattice $(QO(\PN,\eR),\leq)$ of $q$-observable functions with the pointwise order. This shows that $q$-observable functions faithfully represent self-adjoint operators and their order structure, as provided by the spectral order (which differs from the linear order if and only if the von Neumann algebra is nonabelian).

A limited `functional calculus' applies to $q$-observable functions: as long as we consider join-preserving (and hence monotone) functions $f:\eR\ra\eR$, we have $o^{f(\hA)}=f(\oA)$, see Thm. \ref{Thm_ofA=foA}. Physically, such functions $f$ can be interpreted as rescalings of the spectrum of a physical quantity. For such rescalings, it makes sense to preserve the order on $\eR$: if measurements of a physical quantity produce outcomes $a$ and $b$ (which are real numbers in the spectrum of the self-adjoint operator $\hA$) such that $a<b$, and afterwards some rescaling $f:\eR\ra\eR$ is applied, then it is sensible to demand $f(a)\leq f(b)$. In simple cases, such an order-preserving rescaling corresponds to a change of units, e.g. from $^\circ C$ to $^\circ F$.

The question how $q$-observable functions behave under multiplication by $-1$ led us to the definition of $q$-antonymous functions, which correspond bijectively to $q$-observable functions by $-\oA=a^{-\hA}$. This also implies that $q$-antonymous functions correspond bijectively to self-adjoint operators and that $\hA\leq_s\hB$ is equivalent to $\aA\leq a^{\hB}$ in the pointwise order.

The initial motivation for this article was twofold: on the one hand, we wanted to elucidate some of de Groote's results \cite{deG01,deG05,deG07} by providing an order-theoretic perspective and by using Galois connections systematically, while on the other hand, we aimed to gain a deeper understanding (and a generalisation to unbounded operators) of the daseinisation maps from the topos approach to quantum theory \cite{DI08b,DI08c,Doe11}.


Prop. \ref{Prop_OuterDasAndqObsFcts} shows that $\oA|_{\PM}=o^{\doto{\hA}{\cM}}$, that is, the (weak) $q$-observable function of the outer daseinisation of $\hA$ to $\cM$ is given by restriction of the $q$-observable function of $\hA$ to the smaller domain $\PM$. Analogously, we have $\aA|_{\PM}=a^{\dito{\hA}{\cM}}$, that is, the (weak) $q$-antonymous function of the inner daseinisation of $\hA$ to $\cM$ is given by restriction of the $q$-antonymous function of $\hA$ to the smaller domain $\PM$; see Prop. \ref{Prop_InnerDasAndAntonFcts}.

Physically, the inner and outer daseinisations of a bounded self-adjoint operator $\hA$ to a von Neumann subalgebra $\cM\subset\cN$ can be seen as approximations of (the physical quantity described by) $\hA$ to the subalgebra $\cM$. Outer daseinisation is approximation `from above', while inner daseinisation is approximation `from below', both with respect to the spectral order. The more familiar linear order, in which $\hA\leq\hB$ if and only if $\hB-\hA$ is positive, does not give a lattice structure on the self-adjoint operators. (In fact, Kadison famously called $(\cN_{\sa},\leq)$ an \emph{anti-lattice} \cite{Kad51}.) Hence, the approximations from above and from below could not be defined with respect to the linear order, since the relevant meets and joins do not exist in general. Moreover, the approximation with respect to the spectral order guarantees \cite{DI08c,DI11} that 
\begin{equation}
			\sp(\doto{\hA}{\cM})\subseteq\sp\hA,\qquad \sp(\dito{\hA}{\cM})\subseteq\sp\hA,
\end{equation}
which is sensible physically: the approximations $\doto{\hA}{\cM}$ and $\dito{\hA}{\cM}$ have a spectrum that is a subset of the operator $\hA$. In physical terms, the approximations with respect to the spectral order introduce degeneracy, but they do not shift the spectral values.

Often, $\cM$ is taken to be abelian and thus represents a measurement context, since commuting self-adjoint operators correspond to compatible, i.e., co-measurable physical quantities. The representation of physical quantities in the topos approach to quantum theory is based on this idea of approximation \cite{DI08c,DI11}, with $\cM$ varying over the abelian von Neumann subalgebras of $\cN$ that share the unit element with $\cN$.

\textbf{Outlook.} In the second paper \cite{DoeDew12b}, entitled ``Self-adjoint Operators as Functions II: Quantum Probability'', we will interpret $q$-observable functions in the light of quantum probability theory. It will be shown that $\oA:\PN\ra\eR$ is the generalised quantile function of the projection-valued spectral measure $\eA:\cB(\bbR)\ra\PN$ given by a self-adjoint operator $\hA$, where $\cB(\bbR)$ denotes the Borel subsets of the real line. In this perspective, the spectral family $\EA:\bbR\ra\PN$ takes the role of the cumulative distribution function of the measure $\eA$.

Using some results from the topos approach to quantum theory, it will be shown that the so-called \emph{spectral presheaf $\Sig$} of a von Neumann algebra can serve as a joint sample space for \emph{all} quantum observables (seen as random variables). This is possible despite no-go theorems like the Kochen-Specker theorem, since $\Sig$ is a generalised set -- in fact, an object in a presheaf topos -- with a complete bi-Heyting algebra of measurable subobjects, generalising the classical situation of a set with a $\sigma$-algebra of measurable subsets. We will show how the Born rule arises in this new formulation.

\textbf{Acknowledgements.} We thank Prakash Panangaden, Chris Isham, Rui Soares Barbosa and Daniel Marsden for discussions, and we thank Massoud Amini for bringing Comman's article to our attention. We also thank the anonymous referee whose careful reading and suggestions led to a number of improvements. We are grateful to Masanao Ozawa, Izumi Ojima, Mikl\'os R\'edei, Pekka Lahti and John Harding for their kind interest and for further suggestions. Some of these suggestions were incorporated, while others will be developed in a future paper.

\end{document}